\newcommand{\e}{\mathrm{e}}
\newcommand{\D}{\mathrm{d}}
\newcommand{\C}{\mathbb{C}}
\newcommand{\N}{\mathbb{N}}
\newcommand{\R}{\mathbb{R}}
\newcommand{\Da}{\Delta_\alpha}
\newcommand{\Oo}{\mathcal{O}}
\newcommand{\Us}{\upharpoonright_{\Sigma}}
\newcommand{\Ug}{\upharpoonright_{\Gamma}}
\newcommand{\ULi}{\upharpoonright_{\Lambda_{i}}}
\newcommand{\ULj}{\upharpoonright_{\Lambda_{j}}}
\newcommand{\s}{\alpha ,\Sigma}
\newcommand{\bIe}{I^c _\epsilon}
\newcommand{\Ie}{I_\epsilon}
\newcommand{\Qle}{Q_{\lambda }^\epsilon}
\newcommand{\Ple}{P_{\lambda }^\epsilon}
\newcommand{\Qlec}{Q_{\lambda }^{\epsilon c}}
\newcommand{\Hag}{H_{\alpha \Gamma}}
\newcommand{\Hm}[1]{\leavevmode{\marginpar{\tiny%
$\hbox to 0mm{\hspace*{-0.5mm}$\leftarrow$\hss}%
\vcenter{\vrule depth 0.1mm height 0.1mm width \the\marginparwidth}%
\hbox to
0mm{\hss$\rightarrow$\hspace*{-0.5mm}}$\\\relax\raggedright #1}}}
\newtheorem{claim}{Claim}[section]
\newtheorem{theorem}[claim]{Theorem}
\newtheorem{lemma}[claim]{Lemma}
\newtheorem{remark}[claim]{Remark}
\newtheorem{proposition}[claim]{Proposition}
\newenvironment{proof}[1][Proof]{\textsl{#1.} }{\ \rule{0.4em}{0.7em}}
\begin{document}

\title{Straight quantum layer  with impurities inducing resonances  }
\date{}
\author{ Sylwia Kondej}
\maketitle

\begin{center}
{Institute of Physics, University of Zielona G\'{o}ra, ul.
Szafrana 4a,  65246 Zielona G\'{o}ra, Poland}

\end{center}

e-mail: {\it skondej@if.uz.zgora.pl},

\begin{abstract}
We consider a straight three dimensional quantum layer
with singular potential supported on a straight wire which is
localized perpendicularly to the walls and  connects them. We prove
that the infinite number of  embedded eigenvalues appears
in this system. Furthermore, we show that after introducing
a small surface impurity to the layer, the embedded eigenvalues turn to
the second sheet resolvent poles which state
resonances. We discuss the asymptotics of the imaginary component of the resolvent
pole with respect to the surface area.
\end{abstract}



\medskip

{\bf Keywords:} Singular perturbations,
embedded eigenvalues, resonances.

\section{Introduction}\label{introduction}

The paper belongs to the line of  research often called Schr\"odinger operators with delta potentials
\footnote{In the following we will equivalently use the notations delta interaction and delta
potential.}.
The analysis of these type of potentials is motivated by mesoscopic physics systems with the semiconductor
structures  designed in such a way that they  can be mathematically modelled by the Dirac delta  supported on the sets
of lower dimensions. The support of delta potential imitates the  geometry of the semicondutor material, for example,
it can take a form of one dimensional sets (wires) or surfaces with specific geometrical properties. A particle is confined
in the semiconductor  structure however the model admits a possibility of tunneling.
Therefore these types of systems are called in literature \emph{leaky quantum graphs} or \emph{wires}. One of the most
appealing problem in this area is the question  how the geometry of a wire affects the spectrum;
cf.~\cite{AGHH}~and~\cite[Chap.~10]{EK-book}. The aim of the present paper is to discuss
how the surface perturbation leads to resonances.

We consider a non relativistic three dimensional  model of  quantum  particle
confined between two infinite unpenetrable parallel walls which  form
a straight quantum layer defined by $\Omega :=\{(\underline{x}, x_3)\in  \R^2 \times [0, \pi ]\}$.
 In the case of absence
of any additional potential  the Hamiltonian of such system is given  by the negative Laplacian
 $-\Delta\,:\, \mathrm{D}(\Delta )
\,:\, \to L^2 (\Omega )$ with the domain $\mathrm{D}(\Delta )=W^{2,2}(\Omega )\cap W^{1,2}_0(\Omega )$, i.e.~with  the
Dirichlet boundary conditions on $\partial \Omega $.
The spectrum of   $-\Delta $ is determined by $\sigma_{\mathrm{ess}}(-\Delta )=[1, \infty )$ however it is useful to keep
in mind that the energies in $x_3$-direction  are quantized and given by $\{k^2\}_{k=1}^\infty $.

At the first stage we introduce a straight wire $I$ which connects  the walls $\partial \Omega $
being at the same time perpendicular to them. We assume
the presence of interaction localized on $I$ and characterized by the coupling constant $\alpha \in \R$. The symbolic Hamiltonian of
such system can be formally written
\begin{equation}\label{eq-formal}
-\Delta + \delta_{\alpha , I}\,,
\end{equation}
where $\delta_{\alpha ,I}$ represents delta potential supported on $I$.  Since the interaction support in this model has the co-dimension
larger then one it is called \emph{strongly singular potential}.
 The proper mathematical definition of Hamiltonian
can be  formulated  in the terms of boundary conditions. More precisely, we define $H_\alpha $ as  a self adjoint extension
of $-\Delta \left|_{C^{\infty}_0 (\Omega \setminus I)}\right.$ determined by means of the appropriate boundary conditions which functions
from the domain $\mathrm{D}(H_\alpha )$
 satisfy on $I$.
 The coupling constant $\alpha $ is involved in the mentioned boundary
conditions, however, it is worth to say at this point that
 $\alpha $ does not contribute  additively to the structure of Hamiltonian.
\\ To describe spectral  properties of $H_\alpha $ we can rely on radial symmetry of the system
and consider two dimensional system with point interaction governed by the Hamiltonian $H^{(1)}_\alpha $.
The spectrum of $H^{(1)}_\alpha $ consists of positive half line and one discrete negative eigenvalue
$$
\xi_\alpha = -4\e^{2 (-2\pi \alpha +\psi(1))}\,,
$$
cf.~\cite{AGHH}, where $-\psi(1)= 0,577...$ determines the Euler--Mascheroni constant.
This reflexes the structure of spectrum of $H_\alpha $, namely,
for each $l\in \N$ the number
$$
\epsilon_l = \xi_\alpha +l^2\,,
$$
gives rise to an eigenvalue of $H_\alpha $. Note that infinite number  of $\epsilon_l$ lives above the threshold of the
essential spectrum and, consequently, the Hamiltonian $H_\alpha $ admits  the infinite number of \emph{embedded eigenvalues}.
 \\ In the second stage we introduce to the layer
 an attractive interaction supported on a finite $C^2$ surface $\Sigma \subset \Omega $ separated
 from a wire $I$ by some distance, cf.~Fig.~1.  Suppose that  $\beta \neq 0$ is a real number. The Hamiltonian $H_{\alpha , \beta }$ which governs this system can
 be symbolically written  as
 $$
 -\Delta +\delta_{\alpha, I} -\beta \delta_{\Sigma}\,, 
 $$
 where $\delta_{\Sigma}$ stands for the  Dirac delta supported on $\Sigma $; this term represents \emph{weakly
 singular potential}
  Again, a proper mathematical definition of
 $H_{\alpha, \beta }$ can be formulated as a self adjoint extension of $$H_{\alpha }\left|_{\{f\in \mathrm{D}(H_\alpha )
\cap C^\infty (\Omega \setminus I) \,:\, f =0 \,\,\, \mathrm{on}\,\,\,
\Sigma \}}\right. \,.$$
 This extension is defined by means of the
   appropriate boundary conditions on $\Sigma$ discussed in Section~\ref{sec-surface}.

\begin{figure}
\includegraphics[width=.40\textwidth]{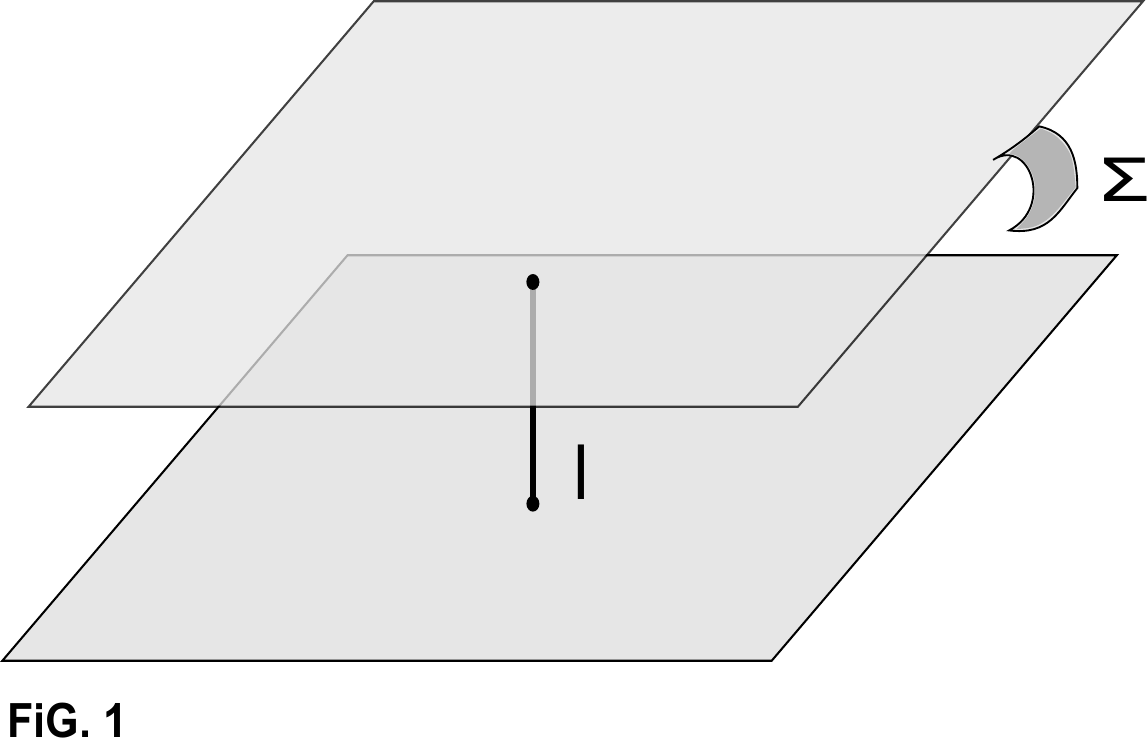}
\end{figure}

 The aim of this paper is to analyse \emph{how the presence of surface interaction supported on
 $\Sigma $ affects the embedded eigenvalues}. The existence of embedded eigenvalues is a direct consequence
 of the symmetry. By introducing additional interaction on $\Sigma$ we break this symmetry, however, if the perturbation
 is small then we may expect that the system preserves a ''spectral memory`` on original eigenvalues.
 In Section~\ref{sec-reson} we show, for example,  that
  if the area $|\Sigma| \to 0  $  then \emph{the embedded eigenvalues $\epsilon_l$
 turn to complex poles of the resolvent
  of $H_{\alpha, \beta }$.} These poles are given by  $z_l = \epsilon_l +o(|\Sigma | ) $ with  $\Im z_l <0$; the latter confirms that $z_l $ is localized on the second sheet continuation.
  We derive the explicit formula for the lowest order of the imaginary component of $z_l$ and show that it admits  the  following asymptotics
  $$\Im z_l = \mathcal{O} (|\Sigma|^2)\,.$$ The poles of resolvent state resonances in the system governed by $H_{\alpha , \beta }$
  and $\Im z_l $ is related to the width of the resonance given by $-2\Im z_l $.
  \\ \\
Finally, let us mention that various  types of resonators in waveguides and layers  have been already analyzed.
For example, in \cite{KS} the authors study resonances induced
by the twisting of waveguide  which is responsible for breaking symmetry.
The planar waveguide with narrows playing the role of resonators
has been studied in \cite{BKNPS}. On the other hand, the straight Dirichlet or Neuman waveguides with windows or barriers
inducing resonances have been analyzed in \cite{BEG, Popov2000, Popov2003}.
Furthermore, resonances in the curved  waveguides with finite branches
have been  described in \cite{DNG}. It is also worth to mention that quantum waveguides with electric and magnetic fields
have been   considered, ~cf.~\cite{BPS, BG}.

On the other hand, the various types of resonators induced by delta potential
in two or three dimensional systems
have been analyzed. Let us mention the results of \cite{EK3, Kondej2012, KK013}
which describe resonances  in the terms of breaking symmetry parameters or by means of tunnelling effect.

In  \cite{KondejLeonski2014} the authors consider a straight two dimensional waveguide with a semitransparent perpendicular
barrier modeled by delta  potential. It was shown that after changing slightly the slope of barrier the embedded eigenvalues turn to resonances;
the widths of these  resonances can be expressed in the terms of the barrier slope.
The present paper is, in a sense, an extension of \cite{KondejLeonski2014}. However,  the strongly singular character of the delta
interaction supported on $I$ causes that even an infinitesimal  change of the slope of $I$, can not be understood as a small perturbation.
Therefore  the resolvent poles are not interesting from the physical point of view
since they rapidly escape far away
from the real line.
In the present model the role of
small perturbation plays  delta potential on $\Sigma$ which
 leads to resonances.

Finally, it is worth to mention that the spectral properties of quantum waveguides and layers with delta
interaction have been studied, for example, in  \cite{EKrecirik99, EN}. The results of \cite{EKrecirik99}
concern weakly singular potentials and in \cite{EN} the authors consider  strongly singular
interaction. In the present paper we combine both  types of delta interaction and analyze how they affect
each other.


  \bigskip

\noindent General notations:
\\ $\bullet$ $\C$ stands for the complex plane and $\C_\pm $ for the upper, respectively,  lower half-plane.

\medskip

\noindent $\bullet$ $\|\cdot \|$, $(\cdot ,\cdot )$ denote the norm and the scalar product
in $L^2 (\Omega )$ and $(\cdot, \cdot )_{\Sigma}$ defines the scalar product in  $L^2 (\Sigma )$.

\medskip
\noindent $\bullet$ Suppose that $A$ stand
for a self adjoint operator. We standardly denote by $\sigma _\mathrm{ess}(A)$,
 $\sigma _\mathrm{p}(A)$ and $\rho (A)$, respectively, the essential spectrum, the point spectrum and the resolvent set of $A$.

\medskip

\noindent  $\bullet$ The notation $C$ stands for a constant which value can vary from line
to line.

\section{Parallel walls connected by wire inducing  embedded eigenvalues}

\subsection{Free particle in layer.}
Let $\Omega \subset  \R^3$ stand for  a layer defined
 by $\Omega := \{ x= ( \underline{x}, x_3)\,:\,\underline{x}
 \in \R^2  \,,x_3 \in [0, \pi ]\}$ and in the following we assume convention
 $\underline{x} = (x_1, x_2)
 \in \R^2 $. \\
 The ''free`` Hamiltonian is determined by  $$H= -\Delta \,:\,
 \mathrm{D}(H)= W^ {2,2} (\Omega )\cap W_0 ^{1,2}(\Omega )\to L^2 (\Omega )$$ and it admits the following decomposition
   \begin{equation}\label{eq-Hamilton}
     H=-\Delta ^{(2)} \otimes I +I \otimes -\Delta^{(1)}\,\quad \mathrm{on}\quad  L^2 (\R^2) \otimes L^2 (0,\pi )\,,
   \end{equation}
 where $ \Delta^{(2)} \,:\, \mathrm{D}(\Delta^{(2)}) = W^{2,2}(\R^2)\to L^2  (\R^2)$ stands for the
 two-dimensional Laplacian and $\Delta^{(1)} \,:\, \mathrm{D}(\Delta^{(1)}) = W^{2,2}(0,\pi)\cap W_0^{1,2}(0,\pi)\to
 L^2 (0,\pi)$ determines one-dimensional Laplacian with the Dirichlet boundary conditions.

 To define the resolvent  of $H$  it is useful to note that
 the sequence $\{\chi_n\}_{n=1}^\infty$ given by   $$\chi_n (x_3) : = \sqrt{\frac{2}{\pi}}\sin ( n x_3)\,,\quad  n\in \N$$
 forms an
 orthonormal basis in $L^2 (0, \pi)$. Suppose that $z\in \C\setminus [1,\infty )$. Then
$R(z):= (-\Delta -z)^{-1}$ defines an integral operator with the kernel
 \begin{equation}\label{eq-integralH}
   \mathcal{G}(z; \underline{x}, \underline{ x}',x_3,x'_3):= \frac{1}{2\pi} \sum_{n=1}^\infty K_0 (\kappa_n (z)
   |\underline{x}- \underline{x}'|) \chi_n (x_3)\chi_n (x_3 ' )\,,
 \end{equation}
 where $K_0 (\cdot)$ denotes the Macdonald function, cf.~\cite{AS}, and
 \begin{equation}\label{eq-defk}
\kappa _n (z):= -i\sqrt{z-n^2}\,, \quad \Im \sqrt{z-n^2} >0 \,.
\end{equation}
In the following we will also use the abbreviation $\mathcal{G} (z)$ for (\ref{eq-integralH}). The threshold of spectrum of $H$ is determined by the lowest discrete transversal energy, i.e. $1$.
 Moreover, it is purely absolutely continuous and consequently, it takes the form
 $$
 \sigma (H) = [1,\infty )\,.
 $$
 \subsection{Layer with perpendicular wire: embedded eigenvalues phenomena.}
We introduce a  wire defined by the straight segment of width $\pi$ and perpendicular to walls.
 The presence of the wire will be modelled by delta interaction supported on $I\subset \Omega $, where
  $I:= (0,0)\times [0,\pi]$.
 \\
 In view of the radial symmetry the operator with delta interaction on $I$ admits a natural
 decomposition on
 $L^2 (\Omega)= L^2 (\R^2)\otimes L^2 (0,\pi )$ and acts in the subspace $L^2 (\R^2)$ as the Schr\"odinger operator
 with one point interaction. Therefore, the  delta potential can be determined by appropriate boundary conditions, cf.~\cite[Chap.~1.5]{AGHH},
 which can be implemented, in each sector of the transversal
energy, separately. For this aim we decompose a function  $\psi\in L^2 (\Omega )$ onto
$\psi (x)=\sum_{n=1}^\infty \psi_n (\underline{x} )\chi_n (x_3)$, where
 $\psi_n (\underline{x} ):= \int_{0}^\pi \psi (\underline{x}, x_3 ) \chi_n (x_3)\mathrm{d}x_3$.

 \begin{description}
   \item[$D_1$)]  We say that a function $\psi $ belongs to the set $D'\subset
 W^{2,2}_{\mathrm{loc}} (\Omega \setminus I) \cap L^2 (\Omega )$ if $\Delta \psi \in L^2 (\Omega )$,  $\psi|_{\partial \Omega }=0$ and the following limits
 $$
 \Xi_n (\psi ):= - \lim_{|\underline{x}|\to 0  }\frac{1}{\ln |\underline{x}|} \psi_n (\underline{x})\,, \qquad  \Omega_n (\psi):=
 \lim_{|\underline{x}|\to 0  } \left( \psi_n (|\underline{x} |) - \Xi_n (\psi )\ln |\underline{x} |\right)$$
 are finite.
   \item[$D_2$)] For  $\alpha \in \R$, we  define the set
 \begin{equation}\label{eq-bcalpha}
 \mathrm{D} (H_\alpha ):= \{ \psi \in D' \,:\, 2\pi \alpha \Xi_n (\psi ) = \Omega_n (\psi)\,\,\,\mathrm{for}\,\,\,\mathrm{any}\,\,\,n\in \N \}\,
 \end{equation}
 and the operator $H_\alpha \,:\,  \mathrm{D} (H_\alpha ) \to L^2 (\Omega )$ which acts
 $$ H_\alpha \psi (x)= -\Delta \psi (x)\,,\quad \mathrm{for }\quad x\in \Omega \setminus I\,.$$
 \end{description}
The resulting operator $H_\alpha \,:\,D(H_\alpha )\to L^2 (\Omega )$  coincides
   \begin{equation}\label{eq-Hamilton}
    -\Delta_\alpha  ^{(2)} \otimes I +I \otimes -\Delta^{(1)}\,\quad \mathrm{on}\quad  L^2 (\R^2) \otimes L^2 (0,\pi )\,,
   \end{equation}
 where $ \Delta_\alpha ^{(2)} \,:\, \mathrm{D}(\Delta_\alpha ^{(2)}) \to L^2  (\R^2)$ stands for the
 two-dimensional Laplacian with point interaction, cf.~\cite[Chap.~1.5]{AGHH} with the domain $\mathrm{D}(\Delta_\alpha ^{(2)}) $. Consequently, $H_\alpha $
  is self adjoint and its spectral properties will be discussed in the next section.

\subsection{Resolvent of $H_\alpha $.} 

Suppose that $z\in \C_+$.
We  use the standard notation $R_\alpha (z)$ for the resolvent operator, i.e. $R_\alpha (z):= (H_\alpha - z)^{-1}$.
To figure out the explicit resolvent formula we introduce
$$
\omega_n (z; x):= \frac{1}{2\pi } K_0 (\kappa_n (z)| \underline{x} | ) \chi_n (x_3)\,, \quad n\in \N\,;
$$
in the following we will use also abbreviation $\omega_n (z)= \omega_n (z; \cdot )$.

The following theorem states the desired result.
\begin{theorem}\label{th-resl1} The essential spectrum of $H_\alpha $ is given by
\begin{equation}\label{eq-ess}
\sigma_{\mathrm{ess}} (H_\alpha )=[1, \infty )\,.
\end{equation}
Furthermore, let \footnote{Analogously as in the previous discussion
 we assume $\Im \sqrt{z-n^2} >0$. The logarithmic function $z\mapsto \ln z $ is defined  in the cut
 plane $-\pi <\arg z <\pi$
 and admits continuation to entire logarithmic Riemann surface.}
\begin{equation}\label{eq-defGamma}
\Gamma_n (z):= \frac{1}{2\pi } \left( 2\pi \alpha +s_n (z)\right)\,,\quad
 \mathrm{where }\quad s_n (z):=-\psi(1) +\ln \frac{\sqrt{z-n^2}}{2i}\,.
\end{equation}
Suppose that $z\in \C\setminus [1,\infty )$ and $\Gamma _n (z) \neq 0$. Then $z\in \rho (H_\alpha )$ and operator $R_\alpha (z)$ admits the Krein-like form:
\begin{equation}\label{eq-resolalpha}
R_\alpha(z)= R(z) +\sum_{n=1}^\infty  \Gamma_n (z)^{-1}  (\omega_n (\bar{z} ), \cdot )\omega_n(z)\,.
\end{equation}
\end{theorem}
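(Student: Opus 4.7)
The plan is to exploit the tensor-sum decomposition (\ref{eq-Hamilton}) of $H_\alpha$ and to transport both the spectral identity and the Krein-type resolvent formula from the well-studied two-dimensional Laplacian with a point interaction, whose theory is fully described in \cite[Chap.~1.5]{AGHH}, to the layer $\Omega$.

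For (\ref{eq-ess}) I would combine the tensor-sum spectral theorem with the known spectra of the factors. Indeed, $\sigma(-\Delta_\alpha^{(2)}) = \{\xi_\alpha\}\cup[0,\infty)$ with the half-line being the absolutely continuous essential part, while $-\Delta^{(1)}$ has compact resolvent and pure point spectrum $\{n^2\}_{n=1}^\infty$ with eigenfunctions $\chi_n$. Hence
$$\sigma_{\mathrm{ess}}(H_\alpha)=\bigl(\sigma_{\mathrm{ess}}(-\Delta_\alpha^{(2)})+\sigma(-\Delta^{(1)})\bigr)\cup\bigl(\sigma(-\Delta_\alpha^{(2)})+\sigma_{\mathrm{ess}}(-\Delta^{(1)})\bigr)=[1,\infty),$$
while the shifted eigenvalues $\xi_\alpha+n^2$ form the point spectrum.

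To derive (\ref{eq-resolalpha}), I would proceed fibre by fibre in the orthogonal decomposition $L^2(\Omega)=\bigoplus_{n=1}^\infty L^2(\R^2)\otimes\mathrm{span}(\chi_n)$. On the $n$-th fibre, $H$ and $H_\alpha$ act as $-\Delta^{(2)}+n^2$ and $-\Delta_\alpha^{(2)}+n^2$ respectively, so their resolvents reduce to $(-\Delta^{(2)}-(z-n^2))^{-1}$ and $(-\Delta_\alpha^{(2)}-(z-n^2))^{-1}$. The standard 2D Krein formula from \cite[Chap.~1.5]{AGHH} applied with spectral parameter $w=z-n^2$ yields a rank-one perturbation with kernel $\Gamma_n(z)^{-1}(G^{(2)}_{\bar w},\cdot)\,G^{(2)}_w$, where $G^{(2)}_w(\underline{x})=\frac{1}{2\pi}K_0(\kappa_n(z)|\underline{x}|)$ and the coefficient is precisely (\ref{eq-defGamma}); this is where the branch convention for $\sqrt{\cdot}$ and $\ln$ recorded in the footnote enters. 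Multiplying by the transverse projector $(\chi_n,\cdot)_{L^2(0,\pi)}\chi_n$ promotes $G^{(2)}_w$ to $\omega_n(z)$ and summing over $n$ produces (\ref{eq-resolalpha}).

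The main technical obstacle is to control the convergence of the infinite series in (\ref{eq-resolalpha}) in the operator norm on $L^2(\Omega)$, and to verify that its limit indeed coincides with $R_\alpha(z)$. For $z$ with $\Gamma_n(z)\ne 0$ for every $n\in\N$, one has $\Re\kappa_n(z)\to+\infty$ of order $n$, and the exponential decay $K_0(\rho)=\mathcal{O}(\rho^{-1/2}\e^{-\Re\rho})$ together with the isometric character of the $\chi_n$ yields $\|\omega_n(z)\|^2=\mathcal{O}(\kappa_n(z)^{-2})$, while $|\Gamma_n(z)|^{-1}$ vanishes only logarithmically; the mutual orthogonality of the rank-one terms inherited from the $\chi_n$ then ensures norm convergence. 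The final step is to check that applying $(H_\alpha-z)$ to the right-hand side produces the identity, and this reduces fibrewise to the corresponding identity in the 2D point-interaction setting: compatibility of the boundary conditions $(D_1)$–$(D_2)$ with the fibrewise decomposition, encoded in (\ref{eq-bcalpha}) through the separate conditions on each $\Xi_n,\Omega_n$, is precisely what licenses this reduction and yields $z\in\rho(H_\alpha)$.
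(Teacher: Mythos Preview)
Your argument is correct, but it follows a genuinely different route from the paper's. For the resolvent formula the paper does \emph{not} transport the two-dimensional Krein formula fibrewise; instead it verifies directly that the right-hand side of (\ref{eq-resolalpha}) is the resolvent of a self-adjoint extension of $-\Delta|_{C_0^\infty(\Omega\setminus I)}$ (by observing that $(\omega_n(\bar z),(-\Delta-z)f)=0$ for $f\in C_0^\infty(\Omega\setminus I)$) and then checks, using the small-argument expansion $K_0(\rho)=\ln(1/\rho)+\psi(1)+\mathcal{O}(\rho)$, that every $g=R_\alpha(z)f$ satisfies the boundary conditions~(\ref{eq-bcalpha}). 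For the essential spectrum the paper argues via Weyl's theorem, showing that $R(z)-R_\alpha(z)$ is compact through compactness of the trace map $W^{2,2}(\Omega)\to L^2(I)$, rather than invoking the tensor-sum spectral formula. Your approach is arguably more economical here, since the decomposition (\ref{eq-Hamilton}) is already in hand and the two-dimensional theory is fully worked out in \cite{AGHH}; it also handles convergence of the series explicitly, which the paper leaves implicit. The paper's route, on the other hand, is self-contained (the boundary-condition check does not presuppose the precise form of the 2D result) and the compactness argument for $\sigma_{\mathrm{ess}}$ is more robust, in that it would survive perturbations that destroy the tensor structure.
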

\begin{proof}
Our first aim is to show that (\ref{eq-resolalpha}) defines the resolvent of $H_\alpha $.
Operator $H_\alpha $ is defined as the self adjoint extension of $-\Delta |_{C^\infty _0 (\Omega \setminus I)}$.
Suppose that $f\in C^\infty _0 (\Omega \setminus I)$. Then $g:=(-\Delta -z )f\in C^\infty _0 (\Omega \setminus I)$.
Employing the fact that $\omega_n (z)=\mathcal{G}(z)\ast (\delta \chi_n )$,
where $\mathcal{G}(z)$ is the kernel defined
by (\ref{eq-integralH}) and $\delta =\delta (\underline{x})$,
we  conclude  that $(\omega_n (\bar{z} ),  g)= \langle \delta \chi_n, f\rangle_{-1,1} = 0$
 where  $\langle\cdot , \cdot \rangle_{-1,1}$ states the duality
between $W^{-1,2}(\Omega )$ and $W^{1,2}(\Omega )$. This, consequently,
 implies
$R_\alpha (z) (-\Delta -z )f = R (z) (-\Delta -z )f= f$ in view of
(\ref{eq-resolalpha}) which means that $R_\alpha (z)$ defines the resolvent of a self adjoint
extension of $-\Delta |_{C^\infty _0 (\Omega \setminus I)}$. To complete the proof we have to show that
any function $g=R_\alpha (z)f$ satisfies boundary conditions (\ref{eq-bcalpha}).
In fact, $g$ admits the unique decomposition $g=g_1+g_2$, where $g_1:= R(z)f$ and
$g_2= \sum_{n=1}^\infty  \Gamma_n (z)^{-1}  (\omega_n (\bar{z} ), f )\omega_n(z)$.
Therefore, a nontrivial contribution to $\Xi_n (g)$ comes from $g_2$ since $g_1 \in W^{2,2}(\Omega )$.
Employing the asymptotic behaviour
of the Macdonald function, cf.~\cite{AS}
\begin{equation}\label{eq-Kexp1}
K_0 (\rho)=  \ln \frac{1}{\rho} +\psi(1)+\mathcal{O}(\rho)\,,
\end{equation}
 we get $\Xi_n (g)= \frac{1}{2\pi } \Gamma_n  (z)^{-1}(\omega_n (\bar{z}), f)$ and
 $$\Omega_n (g)= (1-  \frac{1}{2\pi } \Gamma_n  (z)^{-1}s_n (z) ) (\omega_n (\bar{z}), f) =
 \alpha \Gamma_n  (z)^{-1} (\omega_n (\bar{z}), f)\,.$$
 Using (\ref{eq-defGamma}) one obtains  (\ref{eq-bcalpha}). This completes the proof of (\ref{eq-resolalpha}).
 The stability of the essential spectrum can be concluded in the analogous way as in \cite[Thm.~3.1]{BEKS}. The key step
 is to show that $R(z)-R_\alpha (z)$ is compact. The statement can be proved relying on
compactness of the trace map $S\,:\, 
W^{2,2}(\Omega )\to L^2 (I ) $ which follows from the boundedness of the trace map, cf.~\cite[Chap.~1,~Thm.~8.3]{LM} and the compactness theorem, cf.~\cite[Chap.~1,~Thm.~16.1]{LM}.
 This implies, in view of boundedness of $R(z)\,:\, L^2 (\Omega )\to  W_0^{1,2}(\Omega )\cap W^{2,2}(\Omega )$, that
 $SR(z)\,:\, L^2 (\Omega )\to L^2(I)$ is compact. Employing the resolvent formula, cf.~\cite{Po}, and the fact
 that the remaining operators contributing to
   $R(z)-R_\alpha (z)$  are bounded we conclude that     $R(z)-R_\alpha (z)$  is  compact.
\end{proof}

\begin{remark} {\rm The spectral analysis developed in this work is mainly   based
on  the resolvent properties. In the following we will use the results of
\cite{BEKS, BEHL, Po, Po2} where strongly as well as weakly singular potentials were considered.
}
\end{remark}

In the following theorem we state the existence of eigenvalues of $H_\alpha $.
\begin{theorem} \label{th-ev}
  Let $\mathcal{A}_\alpha := \{n\in \N
  \,:\, \xi_\alpha +n^2 <1 \}$. Each $\epsilon_n:= \xi_\alpha +n^2$, where $n\in \mathcal{A}_\alpha $ defines the discrete
  eigenvalue of $H_\alpha $ with the corresponding eigenfunction $\omega_n:= \omega_n (\epsilon_n)$. In particular,  this means that for any  $\alpha$
  operator $H_\alpha $ has at least one eigenvalue $\epsilon_1$ below the threshold of the essential spectrum.
  \\ Operator $H_\alpha $ has infinite number of embedded eigenvalues. More precisely, for any $n\in \N \setminus \mathcal{A}_\alpha $
  the number $\epsilon_n:= \xi_\alpha +n^2$ determines the embedded eigenvalue. In particular,
  there exists $\tilde{n} \in \N \setminus \mathcal{A}_\alpha $
  such that $\epsilon_n \in ((n-1)^2, n^2)$ for any $n> \tilde{n}$.
\end{theorem}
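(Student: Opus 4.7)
The plan is to exploit the tensor-product decomposition \eqref{eq-Hamilton} together with the known spectral data of the two-dimensional point-interaction operator $-\Delta_\alpha^{(2)}$ (cf.~\cite{AGHH}): the only negative discrete eigenvalue of $-\Delta_\alpha^{(2)}$ is $\xi_\alpha$, with un-normalised eigenfunction $\underline{x}\mapsto K_0(\sqrt{|\xi_\alpha|}|\underline{x}|)$, while the Dirichlet Laplacian on $(0,\pi)$ has eigenvalues $n^2$ with orthonormal eigenfunctions $\chi_n$. Separation of variables then produces, for every $n\in\N$, an eigenvalue $\epsilon_n=\xi_\alpha+n^2$ of $H_\alpha$ with eigenfunction $K_0(\sqrt{|\xi_\alpha|}|\underline{x}|)\chi_n(x_3)$; since $\kappa_n(\epsilon_n)=-i\sqrt{\xi_\alpha}=\sqrt{|\xi_\alpha|}>0$, this function is a scalar multiple of $\omega_n(\epsilon_n)$.

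Because the tensor-product identity \eqref{eq-Hamilton} is an identity of self-adjoint operators, the candidate eigenfunction automatically lies in $\mathrm{D}(H_\alpha)$; its membership in $L^2(\Omega)$ follows from the exponential decay of $K_0$ on the positive real axis together with $\|\chi_n\|_{L^2(0,\pi)}=1$. Should one prefer a direct verification of \eqref{eq-bcalpha}, the only non-trivial transversal channel is $m=n$, and \eqref{eq-Kexp1} reduces the check to the scalar identity $\Gamma_n(\epsilon_n)=0$ in exactly the manner used inside the proof of Theorem~\ref{th-resl1}; substituting $\sqrt{\xi_\alpha}=i\sqrt{|\xi_\alpha|}$ and $\xi_\alpha=-4\e^{2(-2\pi\alpha+\psi(1))}$ into \eqref{eq-defGamma} gives $s_n(\epsilon_n)=-2\pi\alpha$ and hence $\Gamma_n(\epsilon_n)=0$. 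This boundary-condition/transcendental-equation match, delicate because of the logarithmic singularities intrinsic to strongly singular interactions, is the single non-routine step of the proof.

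The classification is then immediate from \eqref{eq-ess}. If $n\in\mathcal{A}_\alpha$ then $\epsilon_n<1$ lies below $\sigma_{\mathrm{ess}}(H_\alpha)=[1,\infty)$ and is an isolated discrete eigenvalue; since $\xi_\alpha<0$ unconditionally, $\epsilon_1<1$ always, so $1\in\mathcal{A}_\alpha$. For $n\in\N\setminus\mathcal{A}_\alpha$ the eigenvalue $\epsilon_n\ge 1$ sits inside the essential spectrum and is therefore embedded, and since $\epsilon_n\to\infty$ there are infinitely many such $n$. The localisation claim $\epsilon_n\in((n-1)^2,n^2)$ is equivalent to $-(2n-1)<\xi_\alpha<0$, which holds for every $n$ with $2n-1>|\xi_\alpha|$; any integer $\tilde n>(1+|\xi_\alpha|)/2$ therefore works.
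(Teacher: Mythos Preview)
Your proof is correct and somewhat more explicit than the paper's, but the route is genuinely different. The paper argues via the Birman--Schwinger principle drawn from the Krein resolvent formula~\eqref{eq-resolalpha}: it invokes \cite[Thm.~2.2]{Po2} to say that $z\in\sigma_{\mathrm p}(H_\alpha)$ iff some $\Gamma_n(z)=0$, then observes that $\Gamma_n(\xi_\alpha+n^2)=0$, and cites \cite[Thm.~3.4]{Po2} for the eigenfunction. You instead read the eigenvalues and eigenfunctions directly off the tensor decomposition~\eqref{eq-Hamilton}, combining the known bound state $\xi_\alpha$ of the two-dimensional point interaction with the transversal modes $\chi_n$; the $\Gamma_n(\epsilon_n)=0$ computation appears only as an optional cross-check. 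Your approach is more self-contained (no external Birman--Schwinger machinery needed) and makes the origin of the embedded eigenvalues---the product structure and the negative bound state of $-\Delta_\alpha^{(2)}$---completely transparent. The paper's route, on the other hand, stays within the resolvent framework that drives the rest of the article and therefore dovetails more naturally with the later resonance analysis. You also spell out the discrete/embedded dichotomy, the fact that $\epsilon_1<1$ for every $\alpha$, and the localisation $\epsilon_n\in((n-1)^2,n^2)$ for $n>(1+|\xi_\alpha|)/2$, all of which the paper leaves to the reader.
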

\begin{proof}
The proof is based on the Birman-Schwinger argument which, in view of (\ref{eq-resolalpha}), reads
$$z\in \sigma_\mathrm{p} (H_\alpha )\quad \Leftrightarrow  \quad  \exists \, n\in \N\,:\,
\Gamma _n(z)=0 \,,$$ cf.~\cite[Thm.~2.2]{Po2}.  Note that, given $n\in \N$
the function $z\mapsto \Gamma_n (z)$, $z \in \{ \C\,:\, \Im \sqrt{z-n^2}> 0\} $ has the unique
zero at $z=\xi_\alpha +n^2$, i.e.
$$
\Gamma_n (\xi_\alpha +n^2) = 0\,.
$$
Finally, it follows, for example, from \cite[Thm.~3.4]{Po2} that the corresponding eigenfunctions takes the form $\mathcal{G} (\epsilon_l)\ast \chi_n \delta $. This completes the proof.
\end{proof}

\section{Surface impurity } \label{sec-surface}

We define  a finite smooth parameterized surface  $\Sigma \subset \Omega $ being  a graph of the map $ U\ni q=(q_1, q_2) \mapsto x (q) \in \Omega $. 
The surface element can be calculated by means of the standard formula
$\mathrm{d} \Sigma = |\partial_{q_1}x (q)\times  \partial_{q_1}x (q) |\mathrm{d}q$.
Additionally we assume that  $\Sigma \cap I = \emptyset$.
Furthermore, let $n\,:\, \Sigma \to \R^3 $ stand for the unit normal vector (with an arbitrary orientation) 
and $\partial_n  $ denote the normal derivative defined by vector $n$.
Relying on the Sobolev theorem we state that the trace map $W^{1,2}(\Omega ) \ni \psi \mapsto \psi |_{\Sigma } \in  L^2 (\Sigma )$ constitutes
a bounded operator;  we set the notation $(\cdot , \cdot )_{\Sigma }$ for the
scalar product in $L^2 (\Sigma )$.
Given $\beta \in \R\setminus \{0\}$ we define the following boundary conditions: suppose
that  $\psi \in  C (\Omega ) \cap C^1 (\Omega\setminus \Sigma  )  $
satisfies
\begin{equation}\label{eq-bc2}
\partial_n ^+ \psi|_{\Sigma } - \partial_n ^- \psi|_{\Sigma }= -\beta  \psi|_{\Sigma }\,,
\end{equation}
where  the partial derivatives contributing  to the above expression
are defined as the positive, resp., negative limits on $\Sigma $ 
and signs are understood with respect
to direction of $n$.
 \begin{description}
   \item[$D_3$)]  We say that a function $\psi $ belongs to the set $\breve{D}\subset
 W^{2,2}_{\mathrm{loc}} (\Omega \setminus (I\cup \Sigma ))$ if $\Delta \psi \in L^2 (\Omega )$, $\psi|_{\partial \Omega }=0$ and the limiting equations (\ref{eq-bcalpha}) and (\ref{eq-bc2})   are satisfied.
   \item[$D_4$)] Define operator which  for $f\in \breve{D}$ acts as  $-\Delta f (x)$ if $x\in \Omega \setminus (I\cup \Sigma)$
   and let $H_{\alpha, \beta }\,:\, \mathrm{D} (H_{\alpha, \beta }) \to L^2 (\Omega )$ stand for its closure.
\end{description}

To figure out the resolvent of $H_{\alpha , \beta }$ we  define the operator acting from
to $L^{2} (\Omega  )$ to $L^{2}(\Sigma  )$ as $
R_{\alpha, \Sigma }(z)f= (R_{\alpha }(z)f )|_{L^2(\Sigma )}
$. Furthermore, we introduce the operator from $L^2 (\Sigma )$ to $L^2 (\Omega )$
defined by $\mathrm{R}_{\alpha, \Sigma } (z)f = \mathcal{G}_{\alpha } \ast f \delta $, where $\mathcal{G}_{\alpha }$ stands
for kernel of (\ref{eq-resolalpha}). Finally, we define  $\mathrm{R}_{\alpha, \Sigma \Sigma } (z)\,:\, L^2 (\Sigma ) \to L^2 (\Sigma )$
by $\mathrm{R}_{\alpha, \Sigma \Sigma } (z) f= (\mathrm{R}_{\alpha, \Sigma } (z)f)|_{\Sigma }$. In view of
(\ref{eq-resolalpha}) the latter takes  the following form
\begin{equation}\label{eq-resolalphaembed}
\mathrm{R}_{\alpha, \Sigma \Sigma } (z)= \mathrm{R}_{\Sigma \Sigma }(z) +\sum_{n=1}^\infty  \Gamma_n (z)^{-1}  (w_n (\bar{z} ), \cdot )_{\Sigma }w_n(z)\,,
\end{equation}
where $w_n(z):= \omega_n(z)|_{\Sigma }$ and $ \mathrm{R}_{\Sigma \Sigma }(z)\,:\, L^2 (\Sigma )\to L^2 (\Sigma )$
stands for the bilateral embedding of $R(z)$.

Following the strategy developed in \cite{Po} we define the set $Z \subset \rho (H_\alpha )$ such that $z$ belongs to $Z$ if the operators
$$(I-\beta \mathrm{R}_{\alpha , \Sigma \Sigma}(z))^{-1}\,,\quad \mathrm{and} \quad (I-\beta \mathrm{R}_{\alpha , \Sigma \Sigma}(\bar{z}))^{-1}$$ acting from $L^2(\Sigma )$ to $L^2 (\Sigma)$
   exist and are bounded. 
   Our aim is to show that
   \begin{equation}\label{eq-Z}
     Z\neq \emptyset\,.
   \end{equation}
  Therefore we auxiliary define the quadratic below bounded  form
   $$
   \int_{\Omega } |\psi |^2\mathrm{d}x - \beta \int_{\Sigma} \left| \psi |_\Sigma \right|^2\mathrm{d}\Sigma\,,\quad \psi\in W^{1,2}_0 (\Omega )\,.
   $$  Let $\breve{H}_\beta $ stand for the operator associated to the above form in the sense of the first representation theorem, cf.~\cite[Chap.VI]{Kato}. Following the arguments from~\cite{BEKS}
 we conclude  that $I-\beta \mathrm{R}_{\Sigma \Sigma}(z)\,:\,L^2 (\Sigma)\to L^2 (\Sigma)$ defines  the Birman--Schwinger operator for $\breve{H}_\beta $.  Using Thm.~2.2~of~\cite{Po2} one obtains
 $$
 z \in \rho (\breve{H}_\beta )\,\, \Leftrightarrow \,\, 0\in \rho (I-\beta \mathrm{R}_{\Sigma \Sigma}(z))\,.
 $$
 In the following we are interested in negative spectral parameter and thus we
  assume $z=-\lambda $ where $\lambda >0$.
Since the spectrum of $\breve{H}_\beta$ is lower bounded we conclude
\begin{equation}\label{eq-rhoRss}
0\in \rho (I-\beta \mathrm{R}_{\Sigma \Sigma }(-\lambda) )\,,
\end{equation}
for $\lambda$ large enough.

Next step is to find a bound for the second component contributing to (\ref{eq-resolalphaembed}). In fact, it can be majorized by
   $$
   \sum_{n=1}^\infty  \left| \Gamma_n (-\lambda)^{-1} \right| \|w_n (-\lambda )\|_{\Sigma }^2 \leq C
    \sum_{n=1}^\infty   \|w_n (-\lambda )\|_{\Sigma }^2\,,
   $$
   where we applied the uniform bound  $\left| \Gamma_n (-\lambda)^{-1} \right| \leq C$, cf.~(\ref{eq-defGamma}).
   Using the large argument expansion, cf.~\cite{AS},
   \begin{equation}\label{eq-K0}
 K_0 (z)\sim \sqrt{\frac{\pi }{2z}}\e ^{-z}
  \end{equation}
 we get the estimate
$$
|w_n (-\lambda ,x)| \leq C \frac{1}{\lambda^{1/4}}\e^{-r_{\mathrm{min}}(n^2+\lambda )^{1/2}}\quad
\mathrm{for} \,\,\,\lambda \to \infty\,,
$$
where $r_{\mathrm{min}}= \min _{x\in \Sigma } |\underline{x}|$. This implies that  the norm of
the second component of  (\ref{eq-resolalphaembed}) behave as $o(\lambda^{-1})$. Combining this result with
(\ref{eq-rhoRss}) we conclude that $0\in \rho (I-\beta \mathrm{R}_{\alpha, \Sigma\Sigma} (-\lambda ))$ for $\lambda $ sufficiently large which shows that (\ref{eq-Z}) holds.

To realize the strategy of~\cite{Po} we observe that the embedding operator
$\tau ^\ast \,:\, L^2 (\Sigma )\to W^{-1,2}(\Omega )$ acting as  $\tau ^\ast f= f\ast \delta  $ is bounded and, moreover,
\begin{equation}\label{eq-tau}
  \mathrm{Ran}\, \tau ^\ast \cap L^2 (\Omega ) = \{0\} \,.
\end{equation}
Suppose that $z\in Z$. Using  (\ref{eq-tau}) together with Thm.~2.1~of~\cite{Po} we conclude that the expression
 \begin{equation}\label{eq-resolbeta}
  R_{\alpha, \beta }(z)= R_{\alpha } (z)+  \mathrm{R}_{\alpha, \Sigma } (z)(I-
  \beta \mathrm{R}_{\alpha , \Sigma \Sigma}(z))^{-1} R_{\alpha, \Sigma } (z)\,.
\end{equation}
defines the resolvent of self adjoint operator.

 \begin{theorem}  We have
 $$
   R_{\alpha, \beta }(z)= (H_{\alpha , \beta } - z)^{-1}\,.
 $$
 \end{theorem}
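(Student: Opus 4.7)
The discussion preceding the statement has already shown, via Theorem~2.1 of~\cite{Po} together with~(\ref{eq-tau}) and the non-emptiness~(\ref{eq-Z}) of $Z$, that the right-hand side of~(\ref{eq-resolbeta}) is the resolvent of \emph{some} self-adjoint extension of $-\Delta|_{C_0^\infty(\Omega\setminus(I\cup\Sigma))}$. My plan is therefore to identify this extension with $H_{\alpha,\beta}$ by verifying that, for every $f\in L^2(\Omega)$, the function $g:=R_{\alpha,\beta}(z)f$ belongs to the set $\breve{D}$ of~$D_3$.

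First I would write $g=g_1+g_2$ with $g_1:=R_\alpha(z)f$ and $g_2:=\mathrm{R}_{\alpha,\Sigma}(z)\mu$, where $\mu:=(I-\beta\,\mathrm{R}_{\alpha,\Sigma\Sigma}(z))^{-1}R_{\alpha,\Sigma}(z)f\in L^2(\Sigma)$. The equation $(-\Delta-z)g=f$ on $\Omega\setminus(I\cup\Sigma)$ and the Dirichlet condition on $\partial\Omega$ are inherited directly from the kernel $\mathcal{G}_\alpha$. For the boundary condition~(\ref{eq-bcalpha}) on $I$, the $g_1$-part is settled by Theorem~\ref{th-resl1}; for $g_2$ the hypothesis $\Sigma\cap I=\emptyset$ places the source $\mu\delta_\Sigma$ away from $I$, while for each fixed $y\in\Sigma$ the map $x\mapsto\mathcal{G}_\alpha(z;x,y)$ satisfies~(\ref{eq-bcalpha}) by construction of~(\ref{eq-resolalpha}). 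Integrating in $y$ against $\mu\,\mathrm{d}\Sigma$ over the compact set $\Sigma$ preserves this condition.

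The decisive step is the jump condition~(\ref{eq-bc2}) on $\Sigma$. Since $g_1\in W^{2,2}_{\mathrm{loc}}(\Omega\setminus I)$ it is $C^1$ across $\Sigma$ and contributes no jump, so the entire jump is carried by $g_2$. Using~(\ref{eq-resolalpha}) I would split $\mathcal{G}_\alpha$ into the free kernel $\mathcal{G}$ plus a series in $\omega_n(z)$ which is smooth in a neighbourhood of $\Sigma$ (again because $\Sigma$ avoids $I$, the only singular locus of the $\omega_n$). The smooth part is irrelevant to the jump, and the remaining free single-layer potential produces the classical jump $\partial_n^+g|_\Sigma-\partial_n^-g|_\Sigma=-\mu$. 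A short algebraic manipulation combining the defining identity $\mu-\beta\,\mathrm{R}_{\alpha,\Sigma\Sigma}(z)\mu=R_{\alpha,\Sigma}(z)f$ with $g|_\Sigma=R_{\alpha,\Sigma}(z)f+\mathrm{R}_{\alpha,\Sigma\Sigma}(z)\mu$ then yields $\mu=\beta\,g|_\Sigma$, which turns the jump into~(\ref{eq-bc2}); this shows $g\in\breve{D}$ and hence $R_{\alpha,\beta}(z)=(H_{\alpha,\beta}-z)^{-1}$.

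The hard part will be making the single-layer jump identity rigorous for an arbitrary $\mu\in L^2(\Sigma)$: for smooth densities it is classical, but here one must rely on the boundedness and compactness of the trace map $W^{2,2}(\Omega)\to L^2(\Sigma)$ from~\cite[Chap.~1, Thms.~8.3, 16.1]{LM} already invoked in the proof of Theorem~\ref{th-resl1}, combined with a standard density argument, to transfer the jump formula to the distributional setting used here.
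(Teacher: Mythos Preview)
Your proposal is correct and takes essentially the same approach as the paper: the paper's own proof is a two-line pointer stating that one should mimic the strategy of Theorem~\ref{th-resl1} by checking that every $g=R_{\alpha,\beta}(z)f$ satisfies the boundary conditions~(\ref{eq-bcalpha}) and~(\ref{eq-bc2}), and then explicitly omits the details. You have simply written out that programme---the decomposition $g=g_1+g_2$, the verification of~(\ref{eq-bcalpha}) via Theorem~\ref{th-resl1} and the smoothness of $\mathcal{G}_\alpha(z;\cdot,y)$ near $I$ for $y\in\Sigma$, and the single-layer jump computation yielding~(\ref{eq-bc2})---which is exactly what the paper leaves to the reader.
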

\begin{proof}
To show the statement we repeat the strategy applied in the proof of Theorem~\ref{th-resl1}.
Operator $H_{\alpha, \beta }$ is defined as the self adjoint  extension of $-\Delta |_{C^\infty_0 (\Omega \setminus (I \cup \Sigma ))}$
determined by imposing boundary conditions (\ref{eq-bcalpha}) and (\ref{eq-bc2}). The idea is show that
any function from the domain $\mathrm{D} (H_{\alpha, \beta })$ satisfies (\ref{eq-bcalpha}) and (\ref{eq-bc2}). Since the proof can be done
by the mimicking the arguments from the proof of Theorem~\ref{th-resl1} we omit further details.
\end{proof}

Furthermore, repeating the arguments from the proof of Theorem~\ref{th-resl1} we state that
$$
\sigma_{\mathrm{ess}} (H_{\alpha , \beta }) = [1\,,\infty )\,.
$$
\bigskip

\noindent {\bf Notation.}  In the following we will be interested in the spectral
asymptotic for small  $|\Sigma |$. Therefore, we introduce an appropriate scaling with respect to a point $x_0 \in \Sigma$.
Namely,  for small positive  parameter $\delta$ we define $\Sigma_\delta$ as the graph of $U \ni q \mapsto x_\delta (q) \in \Omega $ where
$$
x_\delta (q):= \delta x(q)-\delta x_0+x_0\,.
$$
For example a sphere of radius $R$  originated at $x_0$  turns to  the sphere of radius $\delta R$
after scaling.   Note that equivalence $|\partial_{q_1}x_\delta (q)\times  \partial_{q_1}x_\delta (q) |=\delta^2
|\partial_{q_1}x(q)\times  \partial_{q_1}x (q) |$ implies the scaling  of the surface area $|\Sigma_\delta |= \delta^2 |\Sigma |$.

 \section{Preliminary results for the analysis of poles}
 The Birman-Schwinger argument relates the eigenvalues of $H_{\alpha, \beta }$ and zeros
 of $I-\beta \mathrm{R}_{\alpha , \Sigma \Sigma }(z)$ determined by the condition
 $\ker (I-\beta \mathrm{R}_{\alpha , \Sigma \Sigma }(z))\neq \{0\}$.
 To recover resonances we show that $\mathrm{R}_{\alpha , \Sigma \Sigma } (z)$ has a second sheet continuation
 $\mathrm{R}_{\alpha , \Sigma \Sigma }^{\mathit{II}} (z)$ and the statement
 \begin{equation}\label{eq-resonance}
   \ker (I-\beta \mathrm{R}^{\mathit{II}}_{\alpha , \Sigma \Sigma }(z))\neq \{0\}
 \end{equation}
 holds for certain  $z\in \C_-$.
\subsection{Analytic continuation of $\mathrm{R}_{\alpha , \Sigma \Sigma }(z)$}
We start with the analysis of the first component  of
$\mathrm{R}_{\alpha , \Sigma \Sigma }(z)$ determined  by $\mathrm{R}_{\Sigma\Sigma }(z)$, cf.~(\ref{eq-resolalphaembed}).
Since $\mathrm{R}_{\Sigma \Sigma }(z)$ is defined by means of the embedding of
  kernel $\mathcal{G} (z)$, see~(\ref{eq-integralH}),
 the following lemma will be useful for further discussion.
\begin{lemma} \label{le-contR}
For any  $k\in \N$ the function  $\mathcal{G}(z)$ admits the second sheet continuation
$\mathcal{G}^{\mathit{II}}(z)$ through $J_k:= (k^2, (k+1)^2))$
to an open set $\Pi_k \subset \C_-$ and $\partial  \Pi_k \cap \R =J_k$. Moreover,
$\mathcal{G}^{\mathit{II}}(z)$ takes the form
$$
\mathcal{G}^{\mathit{II}} (z; \underline{x},\underline{x}', x_3, x_3') = \frac{1}{2\pi} \sum_{n=1}^\infty Z_0 (i\sqrt{z-n^2} |\underline{x}-\underline{x}'|)\chi_n (x_3) \chi_n (x'_3)\,,
$$
where
\begin{equation}\label{eq-defZ}
    Z_0 (i\sqrt{z-n^2}  \rho ) =
     \left\{\begin{array}{lr}
        K_0 (-i\sqrt{z-n^2}
                          \rho ), & \mathrm{for } \quad n>k \\
         K_0 (-i\sqrt{z-n^2}
                          \rho )+i\pi I_0 (i\sqrt{z-n^2}
                          \rho ), & \mathrm{for }  \quad n\leq k \,,
        \end{array}\right.
\end{equation}
  and $I_0 (\cdot)$ standardly denotes  the Bessel function.
\end{lemma}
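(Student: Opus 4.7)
The plan is to split the series for $\mathcal{G}(z)$ into a finite head indexed by $1\leq n\leq k$ and an infinite tail indexed by $n>k$ and handle them separately. For the tail the branch point $n^{2}\geq(k+1)^{2}$ of $\sqrt{z-n^{2}}$ sits outside $J_{k}=(k^{2},(k+1)^{2})$, so each summand $\tfrac{1}{2\pi}K_{0}(-i\sqrt{z-n^{2}}\rho)\chi_{n}(x_{3})\chi_{n}(x_{3}')$, with $\rho=|\underline{x}-\underline{x}'|$, is already analytic in a complex neighbourhood of $J_{k}$ and extends to $\C_{-}$ with no structural change. This explains the upper case in the definition of $Z_{0}$.

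For $1\leq n\leq k$ a genuine branch-cut crossing is required. I would parametrise the continuation by a short vertical segment $z(t)=E+i\varepsilon(1-2t)$ through a base point $E\in J_{k}$ and follow $\sqrt{z(t)-n^{2}}$ continuously from the first-sheet value (obeying $\Im\sqrt{\cdot}>0$) at $t=0$ to the second-sheet value at $t=1$. The crossing of $[n^{2},\infty)$ flips the sign, so $\kappa_{n}^{II}(z)=-\kappa_{n}^{I}(z)$ for $z$ just below $J_{k}$, and the continued summand equals $K_{0}(-\kappa_{n}^{I}(z)\rho)$. The modified-Bessel connection formula $K_{0}(we^{-i\pi})=K_{0}(w)+i\pi I_{0}(w)$, which is the $\nu\to 0$ limit of $K_{\nu}(ze^{m\pi i})=e^{-m\nu\pi i}K_{\nu}(z)-\pi i\sin(m\nu\pi)\csc(\nu\pi)I_{\nu}(z)$, applied to $w=\kappa_{n}^{I}(z)\rho=-i\sqrt{z-n^{2}}\rho$ together with the evenness of $I_{0}$ gives
\begin{equation*}
K_{0}\bigl(\kappa_{n}^{II}(z)\rho\bigr)=K_{0}\bigl(-i\sqrt{z-n^{2}}\rho\bigr)+i\pi I_{0}\bigl(i\sqrt{z-n^{2}}\rho\bigr),
\end{equation*}
which is precisely the lower case in the definition of $Z_{0}$.

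To globalise I would take $\Pi_{k}\subset\C_{-}$ to be a lens-shaped open set whose upper boundary is $J_{k}$ and whose closure stays uniformly away from every threshold $n^{2}\notin\{k^{2},(k+1)^{2}\}$. The finite head and the single summand $n=k+1$ are analytic on any such $\Pi_{k}$ by the preceding step. For the remaining tail $n\geq k+2$ I would combine the large-argument asymptotic $K_{0}(\zeta)\sim\sqrt{\pi/(2\zeta)}\,e^{-\zeta}$ with a uniform lower bound of the form $\Re(-i\sqrt{z-n^{2}})\geq c\sqrt{n^{2}-\Re z}$ valid for $z\in\Pi_{k}$; this dominates the tail by a convergent exponential series and, by uniform convergence on compact subsets of $\Pi_{k}$, produces analyticity of $\mathcal{G}^{II}$ on $\Pi_{k}$.

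The main difficulty is the geometry required to enforce the precise condition $\partial\Pi_{k}\cap\R=J_{k}$: the vertical extent of $\Pi_{k}$ has to shrink to zero as $\Re z$ approaches either endpoint of $J_{k}$, so that the summand for the marginal index $n=k+1$, whose branch point lies exactly at the right endpoint of $J_{k}$, does not blow up near the boundary and the uniform tail bound survives. Once this tapered shape is pinned down, the remainder of the argument reduces to the Bessel connection formula for the head and to exponential summability of the tail.
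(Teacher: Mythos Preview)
Your argument is correct and rests on the same core identity as the paper: the modified--Bessel connection formula $K_0(ze^{m\pi i})=K_0(z)-im\pi I_0(z)$ (together with the evenness of $I_0$) is precisely what the paper invokes to identify the second-sheet expression for $n\le k$. The difference is one of presentation rather than substance. You build the continuation constructively, following $\kappa_n(z)$ along a vertical segment through $J_k$ and then rewriting $K_0(\kappa_n^{II}\rho)$ via the connection formula; the paper instead writes down the candidate $\mathcal{G}^{II}$ and checks the edge-of-the-wedge matching $\mathcal{G}(\lambda+i0)=\mathcal{G}^{II}(\lambda-i0)$ termwise, which of course reduces to the very same Bessel identity. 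Your route has the advantage of explaining \emph{why} the extra $i\pi I_0$ term appears; the paper's route is shorter once one is willing to simply verify boundary values.

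Your proposal is in fact more complete than the paper's proof on two points the paper leaves implicit: the uniform exponential bound on the tail $n\ge k+2$ (via the large-argument asymptotics of $K_0$) that guarantees the series defines an analytic function on $\Pi_k$, and the tapered geometry of $\Pi_k$ needed to accommodate the branch point at $(k+1)^2$ while enforcing $\partial\Pi_k\cap\R=J_k$. The paper's proof simply does not address these issues.
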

\begin{proof}
  The proof is based  on edge-of-the-wedge theorem, i.e. our aim is to establish the convergence
  $$
  \mathcal{G}(\lambda+i 0) = \mathcal{G}^{\mathit{II}}(\lambda-i 0)\,,
  $$
  for $\lambda \in J_k $.  In fact, it suffices to show that the analogous formula holds for $Z_0$ and $z=\lambda \pm i 0
 $. \\
  Assume first that $n>k$. Then $\sqrt{\lambda -n^2 \pm i0}= \sqrt{\lambda -n^2 }$ since $\Im \sqrt{\lambda -n^2 } >0$.
  Furthermore,  the  function  $K_0 (\cdot)$ is analytic in the upper half-plane, consequently,
  we have $  K_0 (-i\sqrt{\lambda -n^2\pm i0} \rho )
  = K_0 (\sqrt{n^2 -\lambda } \rho )$.\\
  Assume now that $n \leq k$. Then $\sqrt{\lambda -n^2 \pm i0}= \pm \sqrt{\lambda -n^2 }\in \R$ which implies  \begin{equation}\label{eq-Kext} K_0 (-i\sqrt{\lambda -n^2 + i0} \rho )=
  K_0 (-i\sqrt{\lambda -n^2 } \rho )\,.\end{equation}
On the other hand, using the analytic continuation formulae
 $$
  K_0 (z \e^{m\pi i })=K_0 (z) - i\,m\pi I_0 (z)\, \quad \mathrm{and} \quad I_0  (z \e^{m\pi i }) = I_0 (z)\,,
  $$ for   $m\in \N$, we get \begin{eqnarray}
         \nonumber 
           Z_0 (\sqrt{\lambda -n^2 - i0} \rho  ) &=&
   K_0 (i\sqrt{\lambda -n^2 } \rho )+ i\,\pi  I_0 (-i\sqrt{\lambda -n^2 } \rho )
    \\ \nonumber
            &=&  K_0 (-i\sqrt{\lambda -n^2 } \rho )\,.
         \end{eqnarray} This completes the proof.
\end{proof}\\

\begin{figure}
\includegraphics[width=.60\textwidth]{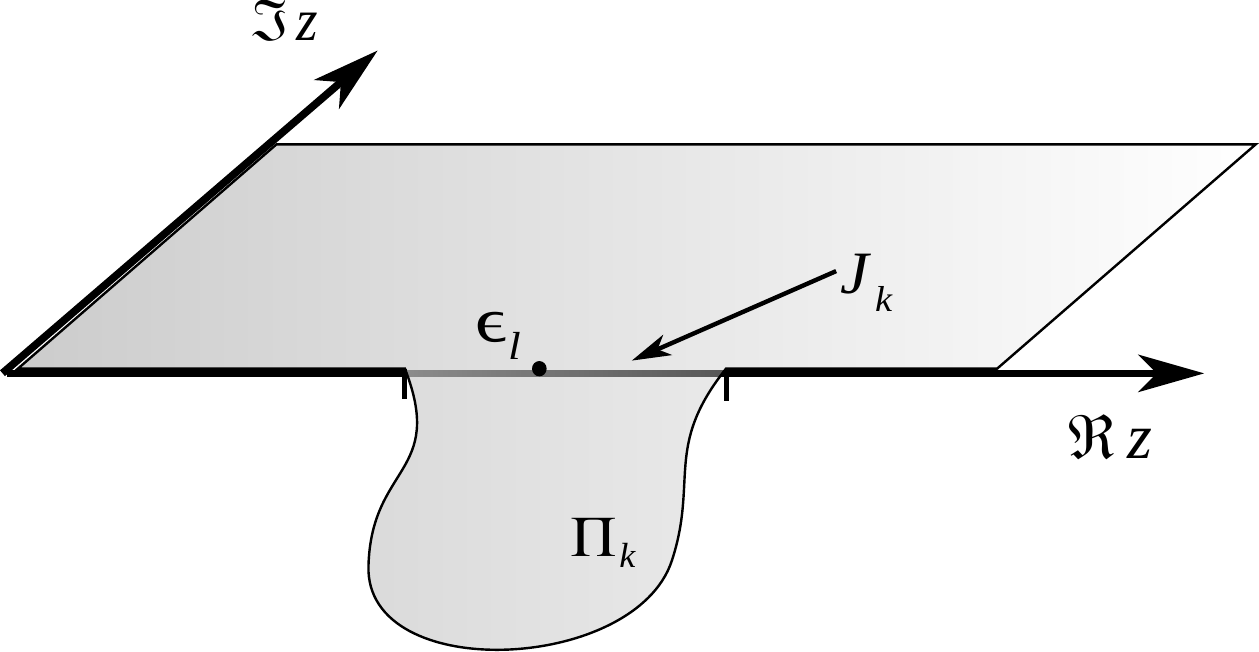}
\end{figure}
The above lemma provides  the second sheet continuation of $R (z)$ as well as  $\mathrm{R}_{\Sigma\Sigma }(z)$;
the latter is defined
as the bilateral embedding of $R^{\mathit{II}} (z)$ to $L^2 (\Sigma )$.
\begin{remark} \rm{
  Note that for each $k\in \N$ the analytic continuation of $\mathcal{G} (\cdot)$ through $J_k $ leads to
  different branches. Therefore, we have to keep in mind that the analytic continuation of $\mathcal{G} (\cdot)$
  is $k$-dependent.}
\end{remark}
In the next lemma we show that the operator $\mathrm{R}^{\mathit{II}}_{\Sigma_\delta  \Sigma_\delta  }(z
)$ is bounded and derive the operator norm  asymptotics if  $\delta \to 0 $.

\begin{lemma} \label{le-boundedR}
  Assume that $k\in \N$  and $\lambda\in J_k$.  Let $z=\lambda -i\varepsilon $, where $\varepsilon $ is a small positive number. Operator $\mathrm{R}^{\mathit{II}}_{\Sigma_\delta \Sigma_\delta} (z) $ is bounded and its norm admits the asymptotics
  \begin{equation}\label{eq-boundedR}
\|
\mathrm{R}^{\mathit{II}}_{\Sigma_\delta \Sigma_\delta} (z) \|  =o(1)\,,
\end{equation}
where the error term is understood with respect to $\delta$.
\end{lemma}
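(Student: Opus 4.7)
My plan is to apply Schur's test to the integral kernel of $\mathrm{R}^{\mathit{II}}_{\Sigma_\delta\Sigma_\delta}(z)$ identified in Lemma~\ref{le-contR} and show that its row/column $L^{1}$ norms on $\Sigma_\delta$ are of order $\delta$. First, I would compare the second-sheet kernel with the first-sheet one. For $n>k$ the two branches of $\sqrt{z-n^2}$ coincide, while for each $n\le k$ the sign flip $\sqrt{z-n^2}_{II}=-\sqrt{z-n^2}_I$ combined with the identity $K_0(ze^{m\pi i})=K_0(z)-im\pi I_0(z)$ used in the proof of Lemma~\ref{le-contR} produces a correction proportional to $I_0\bigl(i\sqrt{z-n^2}\,|\underline{x}-\underline{y}|\bigr)\chi_n(x_3)\chi_n(y_3)$. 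Since $I_0$ is entire and only finitely many $n$ contribute,
\[
\mathcal{G}^{\mathit{II}}(z;x,y)=\mathcal{G}(z;x,y)+b(z;x,y),\qquad \sup_{x,y\in\Sigma_\delta}|b(z;x,y)|\le C_k,
\]
with $C_k$ independent of $\delta$.

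Next, I would produce the three-dimensional Coulomb bound on $\mathcal{G}$. Since $\Sigma$ is compact and separated from $\partial\Omega\cup I$, for $\delta$ small the scaled surface $\Sigma_\delta$ lies inside a fixed compact set $K\subset\Omega$ bounded away from $\partial\Omega$. On such $K$ one represents the Dirichlet layer Green's function by the method of images,
\[
\mathcal{G}(z;x,y)=\sum_{m\in\mathbb{Z}}\Bigl[G_{\mathrm{3D}}(z;x,y+2m\pi\hat e_3)-G_{\mathrm{3D}}(z;x,y^{\ast}+2m\pi\hat e_3)\Bigr],
\]
where $y^{\ast}$ is the reflection of $y$ across $\{x_3=0\}$; only the direct $m=0$ term is singular on $K\times K$, while every other image point stays at a positive distance from $K$, and since $\mathrm{Im}\sqrt{z}>0$ the series converges absolutely, yielding
\[
\mathcal{G}(z;x,y)=\frac{1}{4\pi|x-y|}+g(z;x,y),\qquad \sup_{K\times K}|g|<\infty.
\]
Combined with the previous step one obtains $|\mathcal{G}^{\mathit{II}}(z;x,y)|\le \frac{1}{4\pi|x-y|}+C$ for $x,y\in\Sigma_\delta$.

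The Schur estimate is then carried out by scaling. From $x_\delta(q)=\delta(x(q)-x_0)+x_0$ one has $|x_\delta(q)-x_\delta(q')|=\delta|x(q)-x(q')|$ and $\D\Sigma_\delta=\delta^2\,\D\Sigma$, so a change of variables gives, uniformly in $x\in\Sigma_\delta$,
\[
\int_{\Sigma_\delta}\frac{\D\Sigma_\delta(y)}{|x-y|}=\delta\int_\Sigma\frac{\D\Sigma(\tilde y)}{|\tilde x-\tilde y|}\le C\delta,
\]
the supremum over $\tilde x\in\Sigma$ being finite by the classical weak-singularity estimate for $1/|\tilde x-\tilde y|$ on a smooth compact surface in $\R^3$. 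The bounded part contributes only $C|\Sigma_\delta|=O(\delta^2)$, and the symmetric column estimate is identical since the resolvent kernel of a self-adjoint operator is symmetric under interchange of $x$ and $y$. Schur's test therefore yields $\|\mathrm{R}^{\mathit{II}}_{\Sigma_\delta\Sigma_\delta}(z)\|\le C\delta=o(1)$.

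The main obstacle is justifying the image-sum representation together with the uniform boundedness of its remainder on $K\times K$: one must verify absolute convergence for $z$ slightly below the cut, where $\mathrm{Im}\sqrt{z}$ is only $O(\varepsilon)$ and so the exponential decay in $m$ is slow, and check that every reflected image point stays at a positive distance from $K$. Both follow from the compactness of $\Sigma$ and its separation from $\partial\Omega$, but writing the bounds cleanly requires some care if one wants to track the dependence on $\varepsilon$ as well.
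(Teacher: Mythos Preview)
Your argument is correct and follows the same overall architecture as the paper's proof: split off the finitely many $I_0$ corrections (bounded kernel, hence $O(|\Sigma_\delta|)$ contribution), show that the remaining first-sheet kernel has a Coulomb singularity $\frac{1}{4\pi|x-y|}$ plus a bounded remainder on $\Sigma_\delta\times\Sigma_\delta$, and conclude by Schur's test.

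The genuine difference lies in how the Coulomb singularity is extracted. The paper stays with the transverse-mode expansion: it replaces $K_0(\kappa_n(z)\rho)$ by $K_0(n\rho)$ up to a summable error, and then evaluates $\sum_{n\ge 1}K_0(n\rho)\cos(na)$ by a closed formula from Prudnikov, which produces $\frac{\pi}{2}(\rho^2+a^2)^{-1/2}$ together with a logarithmic term and absolutely convergent remainders. Your route instead invokes the method-of-images representation of the slab Dirichlet Green's function, isolating the direct term $\frac{e^{i\sqrt{z}|x-y|}}{4\pi|x-y|}$ and bounding the reflected images uniformly on $K\times K$. Both lead to the estimate $|\mathcal{G}^{\mathit{II}}(z;x,y)|\le C|x-y|^{-1}+C'$ on $\Sigma_\delta$; the paper's version additionally carries a harmless $\ln|\underline{x}-\underline{x}'|$ term.

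What each buys: the paper's computation is entirely self-contained once the Prudnikov identity is quoted and yields bounds uniform in $\varepsilon$ without further discussion. Your image-sum approach is more conceptual and, combined with the explicit scaling $|x_\delta(q)-x_\delta(q')|=\delta|x(q)-x(q')|$, $\D\Sigma_\delta=\delta^2\D\Sigma$, gives the sharper rate $\|\mathrm{R}^{\mathit{II}}_{\Sigma_\delta\Sigma_\delta}(z)\|=O(\delta)$ rather than the paper's $o(1)$ via the Kato-measure argument. The obstacle you flag---slow exponential decay of the images when $\mathrm{Im}\sqrt{z}=O(\varepsilon)$---is real but harmless here, since the lemma is stated for fixed $z=\lambda-i\varepsilon$ and the $o(1)$ is only in $\delta$; the $\varepsilon$-dependent constant in your remainder bound is acceptable.
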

\begin{proof}
To estimate the kernel of $\mathrm{R}^{\mathit{II}}_{\Sigma_\delta \Sigma_\delta} (z)$ we use  (\ref{eq-defZ}), i.e.
\begin{eqnarray} \label{eq-aux0}
\mathcal{G}^{\mathit{II}}_{\Sigma_\delta \Sigma_\delta  }(z; \rho , x_3, x_3') = \frac{1}{2\pi } \left(\sum _{n=1}^\infty K_0 (\kappa_n (z)\rho )\chi_n (x_3)\chi_n (x_3')
\right. \\  \label{eq-aux0a} \left.+\sum _{n\leq k } I_0 (-\kappa_n (z)\rho )\chi_n (x_3)\chi_n (x_3') \right)\,,
\end{eqnarray}
where $$\rho =|\underline{x}-\underline{x}'|\,,\quad \mathrm{and } \quad x\,,x'\in \Sigma_\delta \,.$$
First, we consider (\ref{eq-aux0a}).
The expression  $$\left|I_0 (-\kappa_n (z)\rho )\chi_n (\cdot )\chi_n (\cdot )\right|$$ is bounded.
Therefore the operator defined by the kernel (\ref{eq-aux0a}) is also bounded and the corresponding operator norm in
$L^2 (\Sigma_\delta )$ behaves as $|\Sigma_\delta |^2 =\mathcal{O}(\delta^4 )$. \\
The analysis of the term (\ref{eq-aux0}) is more involving because it consists of infinite number of components.
The asymptotics:
$$
 K_0 (\kappa_n (z)\rho )-K_0 (n \rho ) = \ln \sqrt{1-\frac{z}{n^2}}\left(  1+\mathcal{O}(\rho )\right)\,
$$
implies
\begin{equation}\label{eq-aux1}
   \sum_{n=1}^\infty \left| (K_0 (\kappa_n (z)\rho )-K_0 (n \rho ))\chi_n (x_3)\chi_n (x_3 ') \right|= C+ \mathcal{O}(\rho )\,;
\end{equation}
remind that $\kappa_n (z)$  is defined by (\ref{eq-defk}).
To estimate $\sum_{n=1}^\infty K_0 (n \rho )\chi_n (x_3)\chi_n (x_3 ') $ we borrow the idea
from~\cite{EN} and use~\cite[Chap. 10, II, 5.9.1.4.]{Prudnikov} to get
\begin{eqnarray}
\label{eq-aux2aa} 
 \sum_{n=1}^\infty K_0 (n\rho )\cos (na ) &=& \frac{\pi }{2\sqrt {\rho^2+a^2}}+\frac{1}{2}  \left(\ln \frac{\rho }{4\pi } - \psi(1)\right) \\ \label{eq-aux2}&& +
  \frac{\pi }{2}\sum _{n=1}^\infty \left( \frac{1}{\sqrt{(2n\pi +a)^2+\rho ^2}}-\frac{1}{2n \pi }\right)\\ \label{eq-aux2a}
   && + \frac{\pi }{2}\sum _{n=1}^\infty \left( \frac{1}{\sqrt{(2n\pi -a)^2+\rho^2}}-\frac{1}{2n \pi }\right)\,.
\end{eqnarray}
For $x,x'\in \Sigma_\delta $ the terms (\ref{eq-aux2}) and (\ref{eq-aux2a}) can be majorized by
$C\left(\sum_{n=1}^\infty \frac{1}{n^2}\right)$, i.e. by a uniform constant.
Consequently, using the above estimates  together with the equivalence
$\sin a \sin b =\frac{1}{2}\left( \cos(a-b)-\cos(a+b)\right)$ we get after
straightforward calculations
\begin{equation}\label{eq-13}
  \left| \sum_{n=1}^\infty K_0 (\kappa_n (z)\rho )\chi_n (x_3)\chi_n (x_3 ')   \right| \leq C\left(\frac{1}{|x-x'|}+
  \ln |\underline{x}-\underline{x}'|\right)\,;
\end{equation}
the singular terms in the above estimates come from (\ref{eq-aux2aa}).
Let us analyze the left hand side of (\ref{eq-13}). First, we  consider the component $\mathcal{P}(x,x'):=\frac{1}{|x-x'|}  $ which
gives
$$
\int_{\Sigma_\delta }\mathcal{P} (x,x') \mathrm{d} \Sigma_\delta = (\mathcal{P} \ast \delta_{\Sigma_\delta} )(x) =
\int_{\Sigma_\delta } \frac{1}{|x -x' |} \mathrm{d}\Sigma_\delta \,.
$$
To conclude the desired convergence we employ the concept
of generalized Kato measure. Namely, since the Dirac delta on $\Sigma_\delta $
defines Kato measure we obtain
$$
\sup _{x\in \Sigma_\delta } \int_{\Sigma_\delta } \mathcal{P} (x,x')  \mathrm{d}\Sigma_\delta  = o(1)\,,
$$
where the right hand side asymptotics is understood in the sense of convergence with respect  to $\delta$.
Employing the Schur argument we conclude that the norm  of the integral operator  with the kernel
$\mathcal{P} (x,x') $ acting from $L^2 (\Sigma_\delta )$ to $L^2 (\Sigma_\delta )$ behaves as $o(1)$.
The term  $\ln |\underline{x}-\underline{x}'|$ contributing to (\ref{eq-13}) can be estimated in the analogous way.
\end{proof}

To recover the second sheet continuation of $
\mathrm{R}_{\alpha, \Sigma \Sigma } (\cdot)$
it remains to construct the analytic extensions of  $\omega_n (z)$ and $\Gamma _n (z)$,~cf.~(\ref{eq-resolalphaembed}).
\begin{lemma}  Given $n\in \N$ the functions
$\omega_n (z)$ and $\Gamma_n (z)$ admit the second sheet continuations $\omega^{\mathit{II}}_n (z)$ and $\Gamma^{\mathit{II}}_n (z)$
to $\Pi_k$ through $J_k= (k^2, (k+1)^2)$, $k\in \N$ defined by
\begin{equation}\label{eq-omega2sheet}
\omega^{\mathit{II}}_n (z; x ):= \frac{1}{2\pi }Z_0 (i\sqrt{z-n^2 }|\underline{x}|)\chi_n (x_3)\,,
\end{equation}
where $Z_0 $ is determined by  (\ref{eq-defZ}), and
\begin{equation} \label{eq-Gamma2sheet}
\Gamma^{\mathit{II}}_n (z)=
     \left\{\begin{array}{lr}
 \frac{1}{2\pi }\left(2\pi \alpha -\psi (1)+\ln \frac{\sqrt{z-n^2}}{2i} \right), & \mathrm{for } \quad n>k \\
 \frac{1}{2\pi }\left(2\pi \alpha -\psi (1)+\ln \frac{\sqrt{z-n^2}}{2i}- \pi i \right), & \mathrm{for }  \quad n\leq k \,.
        \end{array}\right.
\end{equation}
\end{lemma}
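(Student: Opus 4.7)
The plan is to apply the edge-of-the-wedge theorem exactly as in the proof of Lemma~\ref{le-contR}, reducing the lemma to checking that the proposed second-sheet expressions have boundary values on $J_k$ matching those of $\omega_n(z)$ and $\Gamma_n(z)$ taken from $\C_+$.

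For $\omega^{\mathit{II}}_n$ the verification is immediate: the only $z$-dependent factor in $\omega_n(z;x)=\frac{1}{2\pi}K_0(-i\sqrt{z-n^2}|\underline{x}|)\chi_n(x_3)$ is the Macdonald factor, and that is precisely the object whose second-sheet continuation through $J_k$ was identified with $Z_0$ in (\ref{eq-defZ}) in the proof of Lemma~\ref{le-contR}. Multiplying by $\chi_n(x_3)/(2\pi)$ yields formula (\ref{eq-omega2sheet}). For $\Gamma^{\mathit{II}}_n$ the only $z$-dependent term is $\ln\frac{\sqrt{z-n^2}}{2i}$, and two cases arise. If $n>k$ then the branch point $z=n^2$ and its cut $[n^2,\infty)$ lie to the right of the closure of $J_k$, so $\sqrt{z-n^2}$ (with the convention $\Im\sqrt{\,\cdot\,}>0$) and its logarithm are holomorphic in a complex neighbourhood of $J_k$. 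Hence $\Gamma_n$ extends across $J_k$ without modification, giving the first line of (\ref{eq-Gamma2sheet}).

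If instead $n\leq k$, then $J_k$ lies on the branch cut of $\sqrt{z-n^2}$. Approaching $\lambda\in J_k$ from $\C_+$ gives $\sqrt{\lambda-n^2+i0}=+\sqrt{\lambda-n^2}$, whereas enforcing $\Im\sqrt{\,\cdot\,}>0$ from below forces the sign flip $\sqrt{\lambda-n^2-i0}=-\sqrt{\lambda-n^2}$. A short computation with the principal branch $-\pi<\arg<\pi$ then gives
$$
\ln\frac{-\sqrt{\lambda-n^2}}{2i}-\ln\frac{\sqrt{\lambda-n^2}}{2i}=i\pi,
$$
so inserting the correction $-i\pi$ as in the second line of (\ref{eq-Gamma2sheet}) exactly matches the boundary value from below to the boundary value from above, and the edge-of-the-wedge theorem supplies the analytic continuation to $\Pi_k$.

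The main obstacle is purely one of branch bookkeeping: the proposed formula for $\Gamma^{\mathit{II}}_n$ retains the first-sheet symbol $\sqrt{z-n^2}$ even when $z\in\Pi_k\subset\C_-$, so one must consistently track the induced sign flip of the square root together with the principal-branch choice of $\ln$ to recover the precise additive constant $-i\pi$. No further ingredients are required; beyond this careful tracking of branches the argument runs entirely parallel to that of Lemma~\ref{le-contR}.
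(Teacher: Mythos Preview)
Your proof is correct and follows essentially the same approach as the paper: both invoke the edge-of-the-wedge theorem, dispatch $\omega_n^{\mathit{II}}$ by reference to Lemma~\ref{le-contR}, and then verify the matching of boundary values of $\Gamma_n$ on $J_k$ by splitting into the cases $n>k$ and $n\leq k$ and tracking the sign flip of $\sqrt{\lambda-n^2\pm i0}$ together with the principal-branch logarithm. Your branch bookkeeping for the $-i\pi$ correction is in fact a bit more explicit than the paper's.
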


\begin{proof} The construction of $\omega_n^{\mathit{II}}(z)$ can be obtained mimicking the arguments from the proof of Lemma~\ref{le-contR}.
\\ To get $\Gamma^{\mathit{II}}(z)$ we first assume $k<n $ and $z=\lambda \pm i\varepsilon$, $\lambda \in (k^2, (k+1)^2)$.   Then
$\ln \frac{\sqrt{\lambda-n^2\pm i 0}}{i}= \ln \sqrt{n^2 -\lambda }$ and, consequently, $\Gamma_n(\lambda +i0)=\Gamma^{\mathit{II}}_n(\lambda -i0)$. \\
Assume now that $n\leq k$. Then we have $\lambda -n^2 >0$ and
$\ln \frac{\sqrt{\lambda -n^2 \pm i0 }}{i}= \ln \sqrt {\lambda -n^2 \pm i0 }  \mp  \frac{\pi}{2} i $ 
which implies
\begin{equation}\label{eq-Gammaasy}
  \Gamma_n(\lambda +i0)=\Gamma^{\mathit{II}}_n(\lambda -i0) =
 \frac{1}{2\pi }\left(2\pi \alpha -\psi (1)+\ln \sqrt{\lambda -n^2}  - \frac{\pi }{2} i \right)\,.
  \end{equation}
This, in view of edge-of-the-wedge
theorem,  completes the proof.
\end{proof}
\\ \\
Henceforth, we assume  that $\epsilon_n \neq k^2$ for any $k,n\in \N$. Suppose $z=\lambda-i\varepsilon$, where $\varepsilon$ is a small non-negative number and $\lambda \in J_k$.
 At most one eigenvalue $\epsilon_l$ can exist in the interval $J_k$.  Assuming that $z \in (\Pi_k \cup J_k )\setminus \epsilon_l $
we define the analytic functions
$ z\mapsto \Gamma_n ^{\mathit{II}}(z)^{-1}$ for
$n\in \N$. Then the second sheet continuation of the resolvent
takes the form
\begin{equation}\label{eq-resolvent2}
   \mathrm{R}^{\mathit{II}} _{\alpha ,\Sigma \Sigma } (z)= \mathrm{R}^{\mathit{II}} _{\Sigma \Sigma } (z)+ \sum_{n=1}^\infty \Gamma_n ^{\mathit{II}}(z)^{-1}(w^{\mathit{II}}_n (\bar{z} ), \cdot )_{\Sigma }
  w^{\mathit{II}}_n (z )\,
\end{equation}
for $z \in (\Pi_k \cup J_k )\setminus \epsilon_l $.
\medskip

\noindent {\bf Notation.} In the following we will avoid the superscript $\mathit{II}$ keeping in mind that
all quantities
depending on $z$  are defined for   second sheet continuation if $\Im z <0$ which admits infinitely many
 branches $\Pi_k$, $k\in \N$.

\medskip

Assume that $\epsilon_l \in J_k$. Having  in mind latter purposes we define
\begin{equation}\label{eq-defA}
A_l (z):= \sum_{n\neq l} \Gamma_n (z)^{-1} (w_n (\bar{z}), \cdot )_{\Sigma_\delta }w_n (z)\,,
\end{equation}
for $z\in (\Pi_k \cup J_k)\setminus \epsilon_l $. The following lemma states the operator
norm asymptotics.
\begin{lemma} \label{le-boundedA}
 Operator $A_l(z)\,:\,L^2 (\Sigma_\delta ) \to L^2 (\Sigma_\delta )$ is bounded and the operator
 norm satisfies
 \begin{equation}\label{eq-boundednormA}
   \|A_l (z)\| \leq C |\Sigma_\delta |\,.
 \end{equation}
 \end{lemma}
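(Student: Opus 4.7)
The plan is to view $A_l(z)$ as an absolutely convergent sum of rank-one operators on $L^2(\Sigma_\delta)$ and to estimate the norm term by term. Each summand
$T_n(z) := \Gamma_n(z)^{-1} (w_n(\bar z), \cdot )_{\Sigma_\delta} w_n(z)$ has rank one, hence
$$
\|T_n(z)\| \leq |\Gamma_n(z)^{-1}| \, \|w_n(\bar z)\|_{\Sigma_\delta}\, \|w_n(z)\|_{\Sigma_\delta},
$$
and by the triangle inequality it suffices to prove
$$
\sum_{n\neq l} |\Gamma_n(z)^{-1}| \, \|w_n(\bar z)\|_{\Sigma_\delta}\, \|w_n(z)\|_{\Sigma_\delta} \leq C\,|\Sigma_\delta|.
$$

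For the prefactor, formulae (\ref{eq-defGamma}) and (\ref{eq-Gamma2sheet}) show that $|\Gamma_n(z)|\to\infty$ logarithmically as $n\to\infty$, uniformly on compact subsets of $(\Pi_k \cup J_k)\setminus\{\epsilon_l\}$, while for the remaining finitely many indices $\Gamma_n(z)$ stays bounded away from zero: by Theorem~\ref{th-ev} only $\Gamma_l$ vanishes at $\epsilon_l$, and the index $n=l$ is excluded from the sum. Therefore $|\Gamma_n(z)^{-1}|$ is uniformly bounded in $n$ on the relevant set of spectral parameters.

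The core step is the estimate of $\|w_n(z)\|_{\Sigma_\delta}$. Using $w_n(z;x)=\frac{1}{2\pi} Z_0(i\sqrt{z-n^2}|\underline x|)\chi_n(x_3)$ from (\ref{eq-omega2sheet}) and (\ref{eq-defZ}), I would split the series at $n=k$. For the finitely many indices $n\leq k$ the function $Z_0$ is bounded on $\Sigma_\delta$ uniformly in small $\delta$: since $\Sigma_\delta$ contracts to $x_0\notin I$, there exist $\delta_0,r_0>0$ such that $|\underline x|\geq r_0$ for every $x\in\Sigma_\delta$ whenever $\delta\leq\delta_0$, hence $\|w_n(z)\|_{\Sigma_\delta}^2 \leq C_n |\Sigma_\delta|$ for such $n$. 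For the tail $n>k$ the expression $Z_0$ reduces to $K_0$, and the large-argument asymptotics (\ref{eq-K0}) together with $-i\sqrt{z-n^2}\approx \sqrt{n^2-\Re z}$ yields the pointwise bound
$$
|w_n(z;x)|\leq C\,(n^2-\Re z)^{-1/4}\, e^{-r_0\sqrt{n^2-\Re z}/2},\qquad x\in\Sigma_\delta,
$$
and consequently $\|w_n(z)\|_{\Sigma_\delta}^2 \leq C|\Sigma_\delta|\, e^{-r_0 n}$ for all $n$ sufficiently large; the analogous estimate applies to $w_n(\bar z)$.

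Combining these ingredients, each product $|\Gamma_n(z)^{-1}|\|w_n(\bar z)\|_{\Sigma_\delta}\|w_n(z)\|_{\Sigma_\delta}$ is dominated by $C|\Sigma_\delta|\, c_n$ with a summable sequence $\{c_n\}$ independent of $\delta$, which gives (\ref{eq-boundednormA}). The main technical point to verify carefully is the uniform lower bound $r_{\min,\delta}:=\min_{x\in\Sigma_\delta}|\underline x|\geq r_0>0$ under the scaling $x_\delta(q)=\delta x(q)-\delta x_0 + x_0$; it is precisely this uniform separation of $\Sigma_\delta$ from the wire $I$ that activates the exponential decay in $n$ and allows one to factor $|\Sigma_\delta|$ out of the series. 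A minor bookkeeping matter is that for $n\leq k$ the second-sheet $Z_0$ carries the oscillatory $I_0$-contribution of (\ref{eq-defZ}), but since this concerns only finitely many indices it does not affect the convergence of the series.
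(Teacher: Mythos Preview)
Your argument is correct and follows essentially the same route as the paper: bound $A_l(z)$ term by term via the rank-one norm, use the uniform bound $|\Gamma_n(z)^{-1}|\leq C$ from (\ref{eq-Gamma2sheet}), split the sum at $n=k$, and control the tail by the large-argument decay (\ref{eq-K0}) of $K_0$ together with the separation of $\Sigma_\delta$ from $I$. The only cosmetic differences are that the paper estimates the quadratic form $(A_l(z)f,f)_{\Sigma_\delta}$ rather than summing rank-one norms, and that you make explicit the uniform-in-$\delta$ lower bound $r_{\min,\delta}\geq r_0>0$ (which the paper tacitly uses); your observation that the scaling $x_\delta(q)\to x_0\notin I$ guarantees this is a useful clarification.
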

 \begin{proof}
 Suppose that $z=\lambda - i\varepsilon $. We derive   the estimates
 \begin{eqnarray}
 \nonumber 
 |(A_l (z)f,f)_{\Sigma_\delta }| &\leq &  \left( \sum_{n\neq l } |\Gamma_{n}(z)^{-1}| \|w_n (z)\|_{\Sigma_\delta }^2 \right) \|f\|_{\Sigma_\delta }^2
 \leq \\ \label{eq-2}&& C  \left( \sum_{n\neq l } \|w_n (z)\|_{\Sigma_\delta }^2 \right) \|f\|_{\Sigma_\delta }^2\,;
 \end{eqnarray}
to obtain (\ref{eq-2})
we use  (\ref{eq-Gamma2sheet}).
Now our aim is to show
\begin{equation}\label{eq-aux3}
   \sum_{n\neq l } \|w_n (z)\|_{\Sigma_\delta }^2  \leq C  |\Sigma_\delta |\,.
\end{equation}
To find a bound for the left hand side of (\ref{eq-aux3}) we analyse first
the behaviour of $w_n (z)$ for large $n$ and $z\in (\Pi_k \cup J_k)\setminus \epsilon_l$.
For this aim we employ (\ref{eq-omega2sheet}) and (\ref{eq-defZ}).
Note that for  $n>k $
function $w_n (z)$ admits the representation:
$$
w_n (z,x)= \frac{1}{2\pi } K_0 (\kappa_n (z)|\underline{x}|) \chi_n (x_3)\,,
$$
where $x\in \Sigma_\delta $.
Using again the large argument expansion  (\ref{eq-K0}) 
and the fact that
$\Re (-i \sqrt{z-n^2})\sim  n$ we get the estimate
$$
|w_n (z,x)| \leq C\e^{-r_{\mathrm{min}}n}\,,
$$
where $r_{\mathrm{min}}= \min _{x\in \Sigma_\delta } |\underline{x}|$. This implies
\begin{equation}\label{eq-aux4}
 \sum_{n >  k\,, n\neq l } \|w_n (z)\|_{\Sigma_\delta }^2  \leq C|\Sigma_\delta |= \mathcal{O} (\delta^2 )\,.
\end{equation}
On the other hand for $n\leq k $ function $w_n (z)$ consists of $K_0 $ and $I_0$, see (\ref{eq-defZ}). Both functions are continuous
on $\Sigma_\delta $ and therefore $\|w_n\|^2_{\Sigma_\delta } \leq C|\Sigma_\delta |$. Since the number of such components is finite, in view of
(\ref{eq-aux4}),
we come to (\ref{eq-aux3}) which completes the proof.
 \end{proof}

 \section{Complex poles of resovent} \label{sec-reson}

Assume that $\epsilon_l \in J_k$. Suppose that $\delta $ is sufficiently small.  It follows from Lemmae~\ref{le-boundedR}~and~\ref{le-boundedA} that the operators
 $I-\beta \mathrm{R}_{\Sigma_\delta \Sigma_\delta } (z)$ and $I-\beta A_l (z)$ acting in $L^2 (\Sigma_\delta )$
 are invertible for $z\in (\Pi_k \cup J_k)\setminus \epsilon_l$ and it makes sense to introduce auxiliary notation
  $$G_{\Sigma_\delta} (z):= (I-\beta  \mathrm{R}_{\Sigma_\delta \Sigma_\delta } (z) )^{-1}\,.$$
  Since the norm of $ G_{\Sigma_\delta }(z) A_l (z)\,:\,L^2 (\Sigma_\delta) \to L^2 (\Sigma_\delta) $ tends to $0 $ if $\delta\to 0 $ therefore
  the operator $I +\beta G_{\Sigma_\delta }(z) A_l (z)$ is invertible as well.
  \\ The following theorem ``transfer'' the analysis of resonances from the operator equation
  to the complex valued function equation.
 \begin{theorem} \label{th-ker}
   Suppose  $\epsilon_l \in J_k $ and assume that $z\in (\Pi_k \cup J_k )\setminus \epsilon_l$.  Then the condition
   \begin{equation}\label{eq-resoncondmain}
 \ker (
   I-\beta \mathrm{R}_{\alpha , \Sigma_\delta \Sigma_\delta } (z))  \neq \{0\}
   \end{equation}  is equivalent to
   \begin{equation}\label{eq-reoncond}
     \Gamma_l (z) +\beta ( w_l (\bar{z}), T_l (z)w_l (z) ) _{\Sigma_\delta }=0\,,
   \end{equation}
where
   $$
T_l (z) :=  (I -\beta G_{\Sigma_\delta }(z) A_l (z))^{-1}G_{\Sigma_\delta }(z)\,.
   $$
    \end{theorem}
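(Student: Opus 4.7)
The plan is to reduce the Birman--Schwinger condition (\ref{eq-resoncondmain}), an operator-theoretic statement on $L^2(\Sigma_\delta)$, to a scalar equation by successively peeling off invertible factors until only a rank-one perturbation remains.

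First I would isolate the resonant $n=l$ term in the series (\ref{eq-resolalphaembed}), writing
$$\mathrm{R}_{\alpha, \Sigma_\delta \Sigma_\delta}(z) = \mathrm{R}_{\Sigma_\delta \Sigma_\delta}(z) + A_l(z) + \Gamma_l(z)^{-1}(w_l(\bar{z}), \cdot)_{\Sigma_\delta}\, w_l(z),$$
and then multiply $I - \beta \mathrm{R}_{\alpha, \Sigma_\delta \Sigma_\delta}(z)$ on the left by $G_{\Sigma_\delta}(z)$, which is bounded and invertible by the construction of $Z$ together with Lemma~\ref{le-boundedR}. This yields
\begin{equation*}
G_{\Sigma_\delta}(z)[I - \beta \mathrm{R}_{\alpha, \Sigma_\delta \Sigma_\delta}(z)] = I - \beta G_{\Sigma_\delta}(z) A_l(z) - \beta \Gamma_l(z)^{-1} G_{\Sigma_\delta}(z) w_l(z)(w_l(\bar{z}), \cdot)_{\Sigma_\delta}.
\end{equation*}

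The next step is to dispose of the $A_l$ contribution. By Lemma~\ref{le-boundedA} together with the uniform boundedness of $G_{\Sigma_\delta}(z)$ one has $\|\beta G_{\Sigma_\delta}(z) A_l(z)\| = \mathcal{O}(|\Sigma_\delta|)$, so for $\delta$ small enough $B(z) := I - \beta G_{\Sigma_\delta}(z) A_l(z)$ is invertible via Neumann series. Factoring $B(z)$ out on the left recasts the right-hand side above as
$$B(z)\bigl[I - \beta\Gamma_l(z)^{-1} T_l(z) w_l(z)(w_l(\bar{z}), \cdot)_{\Sigma_\delta}\bigr],$$
where the definition $T_l(z) = B(z)^{-1}G_{\Sigma_\delta}(z)$ has been used.

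Since $G_{\Sigma_\delta}(z)$ and $B(z)$ are invertible, condition (\ref{eq-resoncondmain}) is equivalent to nontriviality of the kernel of the rank-one perturbation inside the brackets. I would then apply the standard rank-one argument: a nonzero $f$ in that kernel satisfies $f = \beta\Gamma_l(z)^{-1}(w_l(\bar{z}), f)_{\Sigma_\delta}\, T_l(z) w_l(z)$, forcing $c := (w_l(\bar{z}), f)_{\Sigma_\delta} \neq 0$ (otherwise $f=0$); pairing with $w_l(\bar{z})$ and dividing by $c$ yields the scalar condition (\ref{eq-reoncond}). Conversely, if (\ref{eq-reoncond}) holds, then $f := T_l(z) w_l(z)$ is a nonzero kernel element, giving the reverse implication. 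I anticipate no serious conceptual obstacle; the main care is to keep straight the order of the noncommuting factors in the successive factorizations and to ensure the required invertibility of $B(z)$ holds uniformly on the compact region of $z$ of interest, which is guaranteed by the $\mathcal{O}(|\Sigma_\delta|)$ norm decay established in the preceding lemmas.
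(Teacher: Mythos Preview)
Your argument is correct and follows essentially the same route as the paper: factor out the invertible pieces $(I-\beta\mathrm{R}_{\Sigma_\delta\Sigma_\delta}(z))$ and $(I-\beta G_{\Sigma_\delta}(z)A_l(z))$ to reduce $I-\beta\mathrm{R}_{\alpha,\Sigma_\delta\Sigma_\delta}(z)$ to a rank-one perturbation of the identity, then read off the scalar condition. The only cosmetic difference is that you left-multiply by $G_{\Sigma_\delta}(z)$ whereas the paper writes the same identity as a left factorization by $G_{\Sigma_\delta}(z)^{-1}$; the resulting rank-one bracket and scalar equation are identical.
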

   \begin{proof}
   The strategy of the proof is partially based on the idea borrowed from~\cite{Chu}.
The following equivalences
 \begin{eqnarray} \nonumber
    I-   \beta \mathrm{R}_{\alpha ,  \Sigma_\delta \Sigma_\delta }(z) &=&\\ \nonumber
  &=&(I- \beta \mathrm{R}_{ \Sigma_\delta \Sigma_\delta }(z) )
   \left(I -\beta G_{\Sigma_\delta }(z) A_l (z)\right. \\ \nonumber  && \left.   -\beta \Gamma_l (z)^{-1}(w_l (\bar{z} ), \cdot )_{\Sigma_\delta } G_{\Sigma_\delta }(z)w_l (z) \right)\\ \nonumber
   &=& (I- \beta \mathrm{R}_{ \Sigma_\delta \Sigma_\delta }(z) )(I  - \beta G_{\Sigma_\delta }(z) A_l (z)) \times \\  \nonumber && \left[ I  -\beta \Gamma_l (z)^{-1}(w_l (\bar{z} ), \cdot )_{\Sigma_\delta } T_l (z)w_l (z) \right]
   \,,
\end{eqnarray}
 show that (\ref{eq-resoncondmain}) is equivalent to
 $$
\ker  \left[ I  -\beta \Gamma_l (z)^{-1}(w_l (\bar{z} ), \cdot )_{\Sigma_\delta } T_l (z)w_l (z) \right] \neq \{0 \}\,.
 $$
 The above condition is formulated for  a rank one operator and, consequently, it is equivalent
 to (\ref{eq-reoncond}).
   \end{proof}
   \\
   Theorem~\ref{th-ker} shows that the problem of complex poles of resolvent $R_{\alpha, \beta }(z)$ can be shifted
   to  the problem of the roots analysis of
   \begin{equation}\label{eq-resonII}
    \eta_l (z,\delta )=0 \,,\quad \mathrm{where }\quad  \eta_l (z,\delta ):=\Gamma_l (z) - \beta \vartheta_l (z,\delta )\,,
   \end{equation}
   and
   $$
   \vartheta_l (z,\delta ):=   (  w_l (\bar{z}), T_l (z)w_l (z) ) _{\Sigma_\delta }\,.
   $$
   The further discussion is devoted to figuring out roots
   of  (\ref{eq-resonII}). In the following we apply the expansion  $(1+A)^{-1}=(1-A+A^2-A^3...)$ valid if $\|A\|<1$.
   Taking $-\beta \mathrm{R}_{\Sigma_\delta \Sigma_\delta } (z)$ 
   as $A$ we get
 \begin{equation} \label{eq-exp1aa}
   G_{\Sigma_\delta } (z)=(I- \beta \mathrm{R}_{\Sigma_\delta \Sigma_\delta } (z))^{-1}=I+ \breve{\mathrm{R}}(z)\,,\quad
   \breve{\mathrm{R}}(z):=\sum_{n=1}(\beta \mathrm{R}_{\Sigma_\delta \Sigma_\delta } (z))^n \,.
 \end{equation}
Expanding the analogous sum for $-\beta G_{\Sigma_\delta } (z) A_l (z)$ one obtains
 \begin{equation}\label{eq-exp1aaa}
   (I- \beta G_{\Sigma_\delta } (z) A_l (z))^{-1}= I+   \beta A_l (z) + \beta\mathrm{\breve{R}}(z)A_l (z) +...
 \end{equation}
   In view of Lemmae~\ref{le-boundedA} and~\ref{le-boundedR} the norm of $\mathrm{R}_{\Sigma_\delta \Sigma_\delta} (z)A_l (z)$ behaves
   as $o(1) \| A_l (z)\|_{\Sigma_\delta  }$ for $\delta \to 0$ and the same asymptotics holds for the operator norm of
   $\mathrm{\breve{R}}(z)A_l (z)$.
    The further terms in (\ref{eq-exp1aaa}) are of smaller order with respect
   to $\delta $.
    Consequently,  applying again (\ref{eq-exp1aaa}) we conclude that $T_l (z)$ admits the following expansion
   \begin{equation}\label{eq-exp1}
   T_l (z)= I+\beta A_l (z)+ \mathrm{\breve{R}}(z)+...
   \end{equation}
   Using the above statements we can formulate the main result.
   \begin{theorem}
     Suppose that $\epsilon_l \in J_k $ and consider the function $\eta_l (z,\delta )\,:\, \Pi_k \cup J_k \times [0, \delta_0)\to \C$,
     where $\delta_0>0$,  defined by  (\ref{eq-resonII}). Then the equation
     \begin{equation}\label{eq-sought}
     \eta_l (z, \delta ) =0\,,
     \end{equation}
     possesses a solution which is determined by the function  $ \delta \mapsto z(\delta )\in \C$ with the following asymptotics
   \begin{equation}\label{eq-1}
     z_l(\delta )= \epsilon_l+\mu_l (\delta )\,, \quad |\mu_l(\delta )|= o(1)\,.
   \end{equation}
Moreover, the lowest order term of $\mu_l (\cdot )$ takes the form
\begin{eqnarray} \label{eq-asympmu}
\mu_l(\delta ) = && 4\pi \xi_\alpha \beta  \left\{
\|w_l (\epsilon_l )\|^2_{\Sigma_{\delta }} \right.
\\ \label{eq-asym1} && \left.
+\beta
\sum_{n\neq l }  \Gamma_n (\epsilon_l )^{-1} |(w_l (\epsilon_l), w_n (\epsilon_l))_{\Sigma_\delta }|^2
\right.
\\ \label{eq-asym2} &&
\left.  +   (w_l (\epsilon_l ),\mathrm{\breve{R}}(\epsilon_l) w_l (\epsilon_l ) )_{\Sigma_\delta }   \right\}\,.
\end{eqnarray}
   \end{theorem}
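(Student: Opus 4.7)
The plan is to treat equation~(\ref{eq-sought}) as a perturbed analytic equation in $z$ parametrised by $\delta $ and invoke Rouch\'e's theorem (equivalently, an analytic implicit function theorem). First I would note that as $\delta \to 0$ the surface $\Sigma_\delta $ collapses to $x_0 $, so $\vartheta_l (z,\delta )\to 0$ and the limit equation reads $\eta_l (z,0)=\Gamma_l (z)$. Because $\xi_\alpha <0$ the condition $\epsilon_l \in J_k $ forces $l >k$, so the second-sheet continuation in (\ref{eq-Gamma2sheet}) reduces to the first-sheet $\Gamma_l $; consequently Theorem~\ref{th-ev} gives $\Gamma_l (\epsilon_l )=0$, and a direct differentiation of (\ref{eq-defGamma}) yields
$$\partial_z \Gamma_l (\epsilon_l )=\frac{1}{4\pi (\epsilon_l -l^2 )}=\frac{1}{4\pi \xi_\alpha }\neq 0\,,$$
so $\epsilon_l $ is a \emph{simple} zero of $\Gamma_l $.

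The second step is the existence of a root. Lemmae~\ref{le-boundedR} and~\ref{le-boundedA}, together with the Neumann expansions (\ref{eq-exp1aa}) and (\ref{eq-exp1aaa}), show that $T_l (z)$ is uniformly bounded for $z$ in a small closed disc $D=\{z:|z-\epsilon_l |\leq r\}\subset \Pi_k \cup J_k $ containing no other zero of $\Gamma_l $; consequently
$$|\vartheta_l (z,\delta )|\leq C\|w_l (z)\|_{\Sigma_\delta }^2 =\mathcal{O}(|\Sigma_\delta |)$$
uniformly in $z\in D$. Choosing $r$ so small that $|\Gamma_l (z)|\geq c|z-\epsilon_l |$ on $\partial D$, and then $\delta $ small enough that $|\beta \vartheta_l (z,\delta )|<|\Gamma_l (z)|$ on $\partial D$, Rouch\'e's theorem yields a unique zero $z_l (\delta )$ of $\eta_l (\cdot ,\delta )$ inside $D$. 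Letting $r\to 0$ proves (\ref{eq-1}).

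To extract the leading-order coefficient, I would plug $z=z_l (\delta )$ into (\ref{eq-resonII}) and Taylor-expand the left-hand side around $\epsilon_l $:
$$\frac{\mu_l (\delta )}{4\pi \xi_\alpha }+\mathcal{O}(\mu_l ^2 )=\Gamma_l (z_l (\delta ))=\beta \vartheta_l (z_l (\delta ),\delta )\,.$$
On the right-hand side I would substitute the expansion (\ref{eq-exp1}) of $T_l (z)$ together with the definition (\ref{eq-defA}) of $A_l $. Since $\epsilon_l $ is real, at $z=\epsilon_l $ one has $\bar z =z$ and hence $w_n (\bar z )=w_n (\epsilon_l )$, giving
$$(w_l ,A_l (\epsilon_l )w_l )_{\Sigma_\delta }=\sum_{n\neq l }\Gamma_n (\epsilon_l )^{-1}|(w_l (\epsilon_l ),w_n (\epsilon_l ))_{\Sigma_\delta }|^2\,.$$
Replacing the argument $z_l (\delta )$ by $\epsilon_l $ in every factor produces an additional error of size $|\mu_l |$ times $\partial_z $-derivatives of $\vartheta_l $, which are controlled by Cauchy estimates on $\partial D$. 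Multiplying through by $4\pi \xi_\alpha \beta $ produces the three contributions in (\ref{eq-asympmu})--(\ref{eq-asym2}).

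The main obstacle I expect is the book-keeping of the remainder terms: one must verify that each of the three listed contributions really is of the claimed order and dominates everything that has been dropped. The first term $\|w_l (\epsilon_l )\|_{\Sigma_\delta }^2 $ and the third term are $\mathcal{O}(\delta ^2 )$ by the bound $\|w_l \|_{\Sigma_\delta }^2 =\mathcal{O}(|\Sigma_\delta |)$ used in (\ref{eq-aux3}) together with $\|\breve{\mathrm{R}}(\epsilon_l )\|=o(1)$ from Lemma~\ref{le-boundedR}; the middle term is $\mathcal{O}(\delta ^4 )$ by Cauchy--Schwarz on $(w_l ,w_n )_{\Sigma_\delta }$, and this term alone carries the imaginary part responsible for $\Im z_l =\mathcal{O}(|\Sigma |^2 )$. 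The tail of the Neumann series (\ref{eq-exp1aaa}) contributes $o(\|w_l \|_{\Sigma_\delta }^2 )$, and the shift error from $z_l (\delta )\to \epsilon_l $ is $\mathcal{O}(|\mu_l |\delta ^2 )$; both are strictly subleading. These estimates close the argument and justify that (\ref{eq-asympmu})--(\ref{eq-asym2}) gives the lowest-order asymptotics of $\mu_l $.
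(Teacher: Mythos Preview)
Your proof is correct and follows essentially the paper's route: the derivative computation $\partial_z\Gamma_l(\epsilon_l)=1/(4\pi\xi_\alpha)$, the smallness of $\vartheta_l$, and the Taylor expansion of both $\Gamma_l$ and $\vartheta_l$ using (\ref{eq-exp1}) are all the same, the only cosmetic difference being that you invoke Rouch\'e's theorem for the existence step where the paper uses the Implicit Function Theorem. One side remark in your last paragraph is slightly off: the third term (\ref{eq-asym2}) also contributes to $\Im\mu_l$ (through the second-sheet part of $\mathrm{R}_{\Sigma_\delta\Sigma_\delta}$, cf.~(\ref{eq-defZ})), so the middle term does not carry the imaginary part alone---but this concerns the subsequent section, not the present theorem.
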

   \begin{proof}
   Note that  $z\mapsto \eta_l (z,\delta )$, cf.~(\ref{eq-resonII}), is analytic and
   $\eta_l (\epsilon_l , 0) = 0 $. Using  (\ref{eq-Gamma2sheet}) 
   one obtains $$\left. \frac{d \Gamma _n (z)}{dz}\right|_{z=\epsilon_n }=\frac{1}{4\pi  \xi_\alpha } <0 \,, \quad n\in \N.$$
Combining this with
 $$\left. \frac{\partial  \vartheta_{n}(z,\delta )}{\partial z} \right|_{z=\epsilon_l, \delta =0 } = 0\,, $$
   we get
     $\left.\frac{\partial \eta_l }{\partial z}\right|_{\delta=0}= \frac{1}{4\pi \xi_\alpha }\neq 0$.
     In view of the Implicit Function Theorem we conclude that the equation (\ref{eq-resonII})
      admits a unique solution
     which a continuous function of $\delta \mapsto z_l(\delta)$ and $z_l (\delta )= \epsilon_l +o(1)$.
     To reconstruct asymptotics of $z(\cdot )$
     first we expand $\Gamma_l (z)$  into the Taylor sum
     $$
     \Gamma_l (z)= \frac{1}{4\pi \xi_\alpha } (z-\epsilon_l) +\mathcal{O}((z-\epsilon_l)^2)\,. 
     $$
     Then the spectral equation (\ref{eq-resonII}) reads
     $$
     z=\epsilon_l + 4 \pi\xi_\alpha  \beta \vartheta_l (z,\delta )+\mathcal{O}((z-\epsilon_l)^2)\,.
     $$
      Now we expand $\vartheta_l (z,\delta )$. Using  (\ref{eq-exp1}) and (\ref{eq-defA})
   we reconstruct its  first order term which reads
      \begin{eqnarray} \nonumber 
   && \left\{
\|w_l (\epsilon_l )\|^2_{\Sigma_{\delta }}
+\beta
\sum_{n\neq l }  \Gamma_n (\epsilon_l )^{-1} |(w_l (\epsilon_l), w_n (\epsilon_l))_{\Sigma_\delta }|^2
\right.
\\ \nonumber  && \left.
   + (w_l (\epsilon_l ),\mathrm{\breve{R}}(\epsilon_l) w_l (\epsilon_l ) )_{\Sigma_\delta }  \right\} \,.
\end{eqnarray}
       Applying the asymptotics
     $z_l(\epsilon_l)=\epsilon_l +o(\delta )$ and the fact that $\vartheta_l (\cdot ,\cdot  )$ is analytic with respect to
     complex variable we get formula for $\mu(\cdot)$.
   \end{proof}

   \subsection{Analysis of imaginary part of the pole }

Since the imaginary component of resonance pole has a physical meaning
we dedicate to this problem a special discussion.
The information on the lowest order term of the pole imaginary component is contained in    (\ref{eq-asym1}) and (\ref{eq-asym2}). On the other  hand, note that
only the components subscripted by $n\leq k$ admit a non-zero imaginary parts. Therefore
\begin{eqnarray} \label{eq-imaginary}
\Im && \left( 4\pi \xi_\alpha \beta  \left( \beta
\sum_{n\leq k  } \Gamma_n (\epsilon_l )^{-1} |(w_l (\epsilon_l), w_n (\epsilon_l))_{\Sigma_\delta }|^2
\right. \right.\\ \label{eq-imaginary1} && \left. \left.+ (w_l (\epsilon_l ),\mathrm{\breve{R}}(\epsilon_l) w_l (\epsilon_l ) )_{\Sigma_\delta } \right) \right) \,.
\end{eqnarray}
determines the lowest order term of $\Im \mu (\delta )$.
\\ \\
\emph{Sign and asymptotics of $\Im \mu (\delta )$  with respect to $\Sigma_\delta$ }.
Recall that $\epsilon_l \in J_k$. First we analyse (\ref{eq-imaginary}) and for this aim we define
$$
\iota_{l,n}: = \frac{1}{2\pi }\left( 2\pi \alpha +\ln \frac{\sqrt{\epsilon_l -n^2}}{2}-\psi(1)\right)\,,
$$
for $n\leq k$.
Relying on (\ref{eq-Gammaasy}) we get
$$
\Gamma_l (\epsilon_l )^{-1} = \frac{1}{\iota_{l,n}^2 +(1/2)^2}\left(\iota_{l,n} +\frac{1}{2}i \right)
$$
if  $n\leq k $. Consequently,  formula (\ref{eq-imaginary}) is equivalent to
$$
\Im \, 4\pi \xi_\alpha \beta^2
\sum_{n \leq k }  \frac{1}{2}\frac{1}{\iota_{l,n}^2 +(1/2)^2}
  |(w_l (\epsilon_l), w_n (\epsilon_l))_{\Sigma_\delta }|^2\,.
$$
The above expression is negative because  $\xi_\alpha <0$.
Moreover, since both $w_l (\epsilon_l)$ and $w_l (\epsilon_n)$ are continuous in $\Omega\setminus I$
we have $|(w_l (\epsilon_l), w_n (\epsilon_l))_{\Sigma_\delta }|^2\sim |\Sigma_\delta |^2 $. This means that  (\ref{eq-imaginary}) behaves  as $\mathcal{O}(|\Sigma_\delta|^2)$. To recover the asymptotics of (\ref{eq-imaginary1}) we
restrict ourselves  to the lowest order term of $\breve{\mathrm{R}}(z)$, cf.~(\ref{eq-exp1}), namely
$$
 \upsilon_l := \Im 4\pi \xi_\alpha   \beta ^2 (w_l (\epsilon_l ),\mathrm{R}_{\Sigma_\delta \Sigma _\delta }(\epsilon_l) w_l (\epsilon_l ) )_{\Sigma_\delta }\,.
$$
Using analytic continuation formulae   (\ref{eq-Kext}) and employing the small argument expansion, cf.~\cite{AS},
$$
K_0 (z) \sim  -\ln z\,,
$$
where $-\pi <\arg z <\pi$ states the plane cut for the logarithmic
 function,
one gets
$$
 \upsilon_l \sim \Im  \pi \xi_\alpha   \beta ^2
\sum_{n\leq  k } \left( \int_{\Sigma_\delta } w_l (\epsilon_l )\chi_n \right)^2 =\mathcal{O}(|\Sigma_\delta |^2 )\,.
$$

One can easily see that $ \upsilon_l <0$.
Summing up the above discussion we can formulate the following conclusion.

\begin{proposition}
The resonance pole takes the form $z_l (\delta )=\epsilon_l+\mu (\delta )$ with the lowest order of $\Im \mu (\delta )$ given by
$$
\pi \xi_\alpha   \beta ^2   \sum_{n \leq k }  \left(  \frac{2}{\iota_{l,n}^2 +(1/2)^2}
  |(w_l (\epsilon_l), w_n (\epsilon_l))_{\Sigma_\delta }|^2 +
 \left( \int_{\Sigma_\delta } w_l (\epsilon_l )\chi_n \right)^2 \right) \,.
$$
It follows from the above formula that
$\Im \mu (\delta )<0$ and
the asymptotics $$ \Im  \mu (\delta )= \mathcal{O}(|\Sigma_\delta |^2) \,$$ holds. Moreover, the lowest order
of $\Im \mu (\delta )$ is independent of sign of $\beta $.
\end{proposition}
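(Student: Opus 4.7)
The plan is to consolidate the computations sketched in the paragraphs preceding the statement. By (\ref{eq-asympmu}), the lowest-order term of $\mu(\delta)$ splits into three pieces: the norm $\|w_l(\epsilon_l)\|^2_{\Sigma_\delta}$, a sum over $n\neq l$ weighted by $\Gamma_n(\epsilon_l)^{-1}$, and the cross term $(w_l(\epsilon_l),\breve{\mathrm{R}}(\epsilon_l)w_l(\epsilon_l))_{\Sigma_\delta}$. The first piece is real; in the second, formula (\ref{eq-Gamma2sheet}) shows that $\Gamma_n(\epsilon_l)$ is real whenever $n>k$, so only the finitely many terms with $n\leq k$ contribute to $\Im\mu(\delta)$; and the imaginary content of the third piece arises likewise only from the $n\leq k$ summands of the second-sheet kernel in Lemma~\ref{le-contR}. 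This reduces the problem to controlling the explicit expressions (\ref{eq-imaginary}) and (\ref{eq-imaginary1}).

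For (\ref{eq-imaginary}) I would invert $\Gamma_n(\epsilon_l)$ using (\ref{eq-Gammaasy}), writing $\Gamma_n(\epsilon_l)^{-1} = (\iota_{l,n}+\frac{i}{2})/(\iota_{l,n}^2 + \frac{1}{4})$ for $n\leq k$, and then take imaginary parts termwise to obtain the first contribution in the stated formula. For (\ref{eq-imaginary1}) I would expand $\breve{\mathrm{R}}(\epsilon_l)$ via (\ref{eq-exp1aa}) and, appealing to Lemma~\ref{le-boundedR}, retain only the first-order summand $\beta\mathrm{R}_{\Sigma_\delta\Sigma_\delta}(\epsilon_l)$; its second-sheet kernel, by Lemma~\ref{le-contR}, carries an additional $i\pi I_0(i\sqrt{\epsilon_l-n^2}|\underline{x}-\underline{x}'|)$ for every $n\leq k$. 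Using the expansion $I_0(0)=1$ and continuity of the remaining factors on $\Sigma_\delta$, this yields $\pi\xi_\alpha\beta^2\sum_{n\leq k}\bigl(\int_{\Sigma_\delta}w_l(\epsilon_l)\chi_n\bigr)^2$ to leading order, while the remaining terms of the series (\ref{eq-exp1aa}) are of strictly smaller order in $\delta$ by Lemma~\ref{le-boundedR}.

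Adding the two contributions produces the claimed formula. Negativity of $\Im\mu(\delta)$ follows at once from $\xi_\alpha<0$ together with the non-negativity of both bracketed quantities, which are weighted sums of squares; the $\beta^2$ dependence shows sign-independence in $\beta$. The asymptotic $\Im\mu(\delta)=\mathcal{O}(|\Sigma_\delta|^2)$ is clear because $w_l(\epsilon_l)$, $w_n(\epsilon_l)$, and $\chi_n$ are continuous on the compact surface $\Sigma_\delta$ (which remains separated from $I$), so both $|(w_l(\epsilon_l),w_n(\epsilon_l))_{\Sigma_\delta}|^2$ and $\bigl(\int_{\Sigma_\delta}w_l(\epsilon_l)\chi_n\bigr)^2$ scale as $|\Sigma_\delta|^2$. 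The hardest part is the careful bookkeeping of constants --- factors of $\pi$, $2\pi$, and signs --- as one passes through (\ref{eq-Gammaasy}), the $Z_0$ representation of Lemma~\ref{le-contR}, and the series (\ref{eq-exp1aa}); once these are matched, the remaining assembly is routine.
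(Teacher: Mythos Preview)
Your proposal tracks the paper's argument closely: the proposition there is a summary of the computations in the preceding subsection, and you correctly split $\mu(\delta)$ into the three pieces (\ref{eq-asympmu})--(\ref{eq-asym2}), isolate the $n\leq k$ terms as the only sources of imaginary part, invert $\Gamma_n(\epsilon_l)$ via (\ref{eq-Gammaasy}) for the contribution (\ref{eq-imaginary}), and reduce $\breve{\mathrm{R}}$ to its leading term $\beta\mathrm{R}_{\Sigma_\delta\Sigma_\delta}$ for (\ref{eq-imaginary1}). The negativity, the $\beta^2$--dependence, and the $\mathcal{O}(|\Sigma_\delta|^2)$ scaling are argued exactly as in the paper.

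There is one place where your mechanism diverges from the paper's and, if taken literally, would yield a different constant. In treating (\ref{eq-imaginary1}) you locate the imaginary part of the kernel in the ``additional'' $i\pi I_0$ summand of (\ref{eq-defZ}) and use $I_0(0)=1$. The paper instead works with the first-sheet expression, invoking (\ref{eq-Kext}) together with the small-argument expansion $K_0(z)\sim -\ln z$. The point is that on the real segment $J_k$ the two sheets coincide, and there the $i\pi I_0$ term in $Z_0$ is exactly cancelled by the branch change of $\sqrt{z-n^2}$ inside $K_0$; so at $z=\epsilon_l$ the relevant imaginary part is $\Im K_0(-i\sqrt{\epsilon_l-n^2}\,\rho)\sim \Im(-\ln(-i\sqrt{\epsilon_l-n^2}\,\rho))=\pi/2$, not $\pi$. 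Your version would therefore produce $2\pi\xi_\alpha\beta^2$ in front of the second sum instead of the paper's $\pi\xi_\alpha\beta^2$. Your own caveat about the bookkeeping of factors of $\pi$ is thus well placed; once that single factor is corrected, the remainder of your argument coincides with the paper's.
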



Note that for the special geometrical cases the embedded eigenvalues can survive
after introducing $\Sigma_\delta $ since the "perturbed" eigenfunctions are not affected
by presence of $\Sigma_\delta $.
Let us consider
 $$\Pi_l := \{x\in \Omega \,:\, x=\left(\underline{x}, \frac{\pi }{l }\right)\,, \quad l\in \N$$
 and assume that $\Sigma_\delta  \subset \Pi_l $. Then $w_{ml} (z) = 0 $ for each $m\in \N$ and, consequently
 $\vartheta_{ml} (z,\delta ) = 0$, cf.~(\ref{eq-resonII}). This implies the following statement.

\begin{proposition}
  Suppose that $\Sigma \subset \Pi_l $. Then
  for all $m\in \N$ the numbers $\epsilon_{ml}$ remain the embedded eigenvalues of $H_{\alpha , \beta }$.
\end{proposition}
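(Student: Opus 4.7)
The plan is to verify the statement constructively by showing that the eigenfunctions of $H_\alpha$ associated with $\epsilon_{ml}$ are unaffected by the presence of $\Sigma$ when $\Sigma\subset \Pi_l$, so they persist as eigenfunctions of $H_{\alpha,\beta}$.

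The starting point is the explicit form of the eigenfunction supplied by Theorem~\ref{th-ev}: for each $m\in \N$ one has
\[
\omega_{ml}(\epsilon_{ml};x)=\frac{1}{2\pi}K_0\!\left(\sqrt{-\xi_\alpha}\,|\underline{x}|\right)\chi_{ml}(x_3),
\qquad H_\alpha \omega_{ml}(\epsilon_{ml})=\epsilon_{ml}\,\omega_{ml}(\epsilon_{ml}).
\]
The first step would be to observe that if $x\in \Sigma\subset \Pi_l$, then $x_3=\pi/l$ and hence
\[
\chi_{ml}(\pi/l)=\sqrt{\tfrac{2}{\pi}}\sin(m\pi)=0,
\]
so $\omega_{ml}(\epsilon_{ml})$ vanishes identically on $\Sigma$. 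This is exactly the vanishing of $w_{ml}(z)$ already noted before the statement.

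Next, since $\Sigma\cap I=\emptyset$, the function $\omega_{ml}(\epsilon_{ml})$ is smooth in an open neighbourhood of $\Sigma$; in particular its two-sided normal derivatives on $\Sigma$ coincide, so the jump $\partial_n^+\psi|_\Sigma -\partial_n^-\psi|_\Sigma$ is zero. Combined with $\omega_{ml}(\epsilon_{ml})|_\Sigma=0$, the transmission condition (\ref{eq-bc2}) is trivially satisfied (both sides equal zero). Because $\omega_{ml}(\epsilon_{ml})\in \mathrm{D}(H_\alpha)$ it also fulfils (\ref{eq-bcalpha}) on $I$ and the Dirichlet condition on $\partial\Omega$, so $\omega_{ml}(\epsilon_{ml})\in \breve{D}\subset \mathrm{D}(H_{\alpha,\beta})$. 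Then on $\Omega\setminus(I\cup \Sigma)$ one has $H_{\alpha,\beta}\,\omega_{ml}(\epsilon_{ml})=-\Delta\,\omega_{ml}(\epsilon_{ml})=\epsilon_{ml}\,\omega_{ml}(\epsilon_{ml})$, and since $\omega_{ml}(\epsilon_{ml})$ is square-integrable (because $\sqrt{-\xi_\alpha}>0$ ensures exponential decay of $K_0$ in $|\underline{x}|$), the number $\epsilon_{ml}$ is a genuine point-spectrum eigenvalue of $H_{\alpha,\beta}$. Whenever $ml\notin \mathcal A_\alpha$ this eigenvalue lies in $[1,\infty)=\sigma_{\mathrm{ess}}(H_{\alpha,\beta})$ and is therefore embedded.

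There is no serious analytic obstacle here; the delicate point, and the one I would check most carefully, is that $\omega_{ml}(\epsilon_{ml})$ actually belongs to the operator domain $\mathrm{D}(H_{\alpha,\beta})$ defined by $D_3$--$D_4$. Concretely one must verify (i) the local $W^{2,2}$ regularity off $I\cup\Sigma$ and the $L^2$ property of $\Delta \omega_{ml}(\epsilon_{ml})$, both inherited from $\mathrm{D}(H_\alpha)$, and (ii) that the trivial satisfaction of (\ref{eq-bc2}) is compatible with the closure operation used to define $H_{\alpha,\beta}$. Alternatively one could argue via the Birman--Schwinger-type equation (\ref{eq-reoncond}): vanishing of $w_{ml}(z)$ on $\Sigma$ forces $\vartheta_{ml}(z,\delta)\equiv 0$, whence the spectral equation collapses to $\Gamma_{ml}(z)=0$, whose only real solution is $z=\epsilon_{ml}$; this furnishes an independent check that no complex shift of $\epsilon_{ml}$ can occur.
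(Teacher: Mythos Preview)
Your proof is correct and essentially follows the same reasoning the paper gives in the paragraph immediately preceding the proposition: the paper observes both that ``the `perturbed' eigenfunctions are not affected by presence of $\Sigma_\delta$'' (your direct argument) and that $w_{ml}(z)=0$ forces $\vartheta_{ml}(z,\delta)=0$ (your alternative via (\ref{eq-reoncond})). Your write-up is in fact more detailed than the paper's, which leaves the verification that $\omega_{ml}(\epsilon_{ml})\in\mathrm{D}(H_{\alpha,\beta})$ implicit.
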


\subsection*{Acknowledgements}
The  author  thanks  the  referees  for  reading  the  paper  carefully, removing errors and
recommending
various improvements in exposition. \\
The work was  supported
 by the project DEC-2013/11/B/ST1/03067 of the Polish National Science Centre.

\end{document}